\def\colorful{0}
\newcommand{\purple}[1]{#1}
\newcommand{\violet}[1]{{\color{violet}{#1}}}
\newcommand{\violet}[1]{{{#1}}}
\newcommand{\cost}{\mathrm{cost}}
\newcommand{\bias}{\mathrm{error}_{\pm 1}} 
\newcommand{\error}{\mathrm{error}}
\newcommand{\query}{\mathrm{query}}
\newcommand{\paths}{\mathrm{path}}
\newcommand{\leaf}{\mathrm{leaf}}
\newcommand{\nexti}{\mathrm{next\_i}}
\newcommand{\pparagraph}[1]{\bigskip \noindent {\bf {#1}}}
\newtheorem*{rep@theorem}{\rep@title}
\newcommand{\newreptheorem}[2]{
\newenvironment{rep#1}[1]{
 \def\rep@title{#2 \ref{##1}}
 \begin{rep@theorem}\itshape}
 {\end{rep@theorem}}}
\theoremstyle{plain}
\newtheorem*{rep@claim}{\rep@title}
\newcommand{\newrepclaim}[2]{
\newenvironment{rep#1}[1]{
 \def\rep@title{#2 \ref{##1}}
 \begin{rep@claim}\itshape}
 {\end{rep@claim}}}
\theoremstyle{plain}
\begin{document}

\title{Query strategies for priced information, revisited \vspace{10pt} }

\author{Guy Blanc \vspace{10pt} \\ \hspace{-5pt}{\sl Stanford} \and \hspace{10pt} Jane Lange \vspace{10pt} \\
\hspace{8pt} {\sl MIT} \vspace{15pt} 
\and Li-Yang Tan \vspace{10pt} \\ 
\hspace{-5pt}{\sl Stanford} }  

 \date{\small{\today}}

\maketitle

\begin{abstract}
We consider the problem of designing {\sl{query strategies for priced information}}, introduced by Charikar et al.   In this problem the algorithm designer is given a function~$f : \zo^n \to \bits$  and a price associated with each of the $n$ coordinates.  The goal is to design a query strategy for determining $f$'s value on unknown inputs for minimum cost. 

Prior works on this problem have focused on specific classes of functions.  We analyze a simple and natural strategy that applies to all functions $f$, and show that its performance relative to the optimal strategy can be expressed in terms of a basic complexity measure of $f$, its {\sl influence}.  For $\eps \in (0,\frac1{2})$, writing $\mathsf{opt}$ to denote the expected cost of the optimal strategy that errs on at most an $\eps$-fraction of inputs, our strategy has expected cost $\mathsf{opt} \cdot \Inf(f)/\purple{\eps^2}$ and also errs on at most an $O(\eps)$-fraction of inputs. This connection yields new guarantees that complement existing ones for a number of function classes that have been studied in this context, as well as new guarantees for new classes.

Finally, we show that improving on the parameters that we achieve will require  making progress on the longstanding open problem of properly learning decision trees. 
\end{abstract}

 \thispagestyle{empty}
\newpage 

\setcounter{page}{1}

\section{Introduction}

We consider the problem of designing {\sl{query strategies for priced information}}, introduced by Charikar, Fagin, Guruswami, Kleinberg, Raghavan, and Sahai~\cite{CFGKRS00}.  In this problem, the algorithm designer is given a function $f : \zo^n \to \{\pm 1\}$ and a {\sl price} associated with each of the $n$ coordinates.   The goal is to design a {\sl query strategy} for evaluating $f$ on an unknown input $\underline{x} \in \zo^n$ for minimum cost.  A query strategy $\mathcal{S}$ is an adaptive sequence of queries to the coordinates of $\underline{x}$, where a cost of~$c_i$ is incurred to reveal the $i$-th coordinate of $\underline{x}$, at the end of which $\mathcal{S}$ is expected to output $f(\underline{x})$.   

The introduction of prices to the query model adds a rich new dimension to one of the simplest models of computation.   Quoting Charikar et al., ``We find that the inclusion of arbitrary prices on [input coordinates] gives the problem a much more complex character, and leads to query algorithms that are novel and non-obvious"~\cite[p.~583]{CFGKRS00}.  Indeed, following the work of~\cite{CFGKRS00}, there have been a large number of works studying various formulations of this problem, and within these formulations, designing strategies for a wide variety of function classes~\cite{GK01,KKM05,CL05,CL05-ESA,CL06,CL08,CGLM11,CM11,CL11,DHK14,CLS17,AHKU17,GGHK18,BDHK18,Hel18}.

The formulation that we work with is a slight variant of the one introduced by Deshpande, Hellerstein, and Kletenik~\cite{DHK14}.   We are given as input a succinct representation of a function $f : \zo^n \to \{\pm 1\}$, a cost vector $c \in \N^n$ where $c_i$ is the cost associated with the $i$-th input coordinate, and an error parameter $\eps \in (0,\frac1{2})$.  For a strategy $\mathcal{S}$ and an input $\underline{x} \in \zo^n$ to $f$, we write $\cost_c(\mathcal{S},\underline{x})$ to denote the total cost incurred by $\mathcal{S}$ on input~$\underline{x}$, and $\Delta_c(\mathcal{S}) \coloneqq \Ex[\cost_c(\mathcal{S},\bx)]$ to denote the expected cost of $\mathcal{S}$ on a uniform random input $\bx \sim \zo^n$.   We write $\mathcal{S}(\underline{x})$ to denote the output of $\mathcal{S}$ on $\underline{x}$.  

Our goal is to design {\sl accurate}, {\sl computationally efficient}, and {\sl cost efficient} strategies:

\begin{itemize}[leftmargin=0.5cm]
\item[$\circ$] {\sl Accuracy}: We say that $\mathcal{S}$ is an {\sl $\eps$-error strategy for $f$} if $\Prx[\mathcal{S}(\bx) \ne f(\bx)] \le \eps,$ where $\bx\sim\zo^n$ is uniform random. 
\item[$\circ$] {\sl Computational efficiency}: We say that $\mathcal{S}$ is {\sl computationally efficient} if every next query, and the value of $\mathcal{S}(\underline{x})$ after the final query, can be computed in time that is polynomial in the input length (i.e.~the representation sizes of $f$, the cost vector $c$, and $\eps$).  
\item[$\circ$] {\sl Cost efficiency:}  Writing $\opt = \opt(f,c,\eps)$ to denote the minimum expected cost among all $\eps$-error strategies for $f$, our goal is to design a strategy $\mathcal{S}$ with expected cost as close to $\opt$ as possible. (While we require our strategy $\mathcal{S}$ to be computationally efficient, we do not require the minimum-cost strategy that witnesses $\opt$ to be computationally efficient.) 
\end{itemize} 

In the case of randomized strategies $\mathcal{S}$, both expected cost and accuracy are also measured with respect to the randomness of $\mathcal{S}$ (in addition to the randomness of the input $\bx$). 

\pparagraph{\violet{This work: A strategy for all functions.}}   Prior works on various formulations of the problem of querying priced information  focused on different specific classes of functions: conjunctions and disjunctions~\cite{KKM05}, various subclasses of DNF formulas~\cite{KKM05,DHK14,AHKU17}, halfspaces and their generalizations~\cite{FP04,DHK14,GGHK18}, {\sc And}-{\sc Or} trees~\cite{CFGKRS00}, symmetric functions~\cite{GGHK18,Hel18}, and so on.  These works designed and analyzed strategies that are tailored to each of these classes.   

The main contribution of our work is to analyze a simple and natural strategy that applies to  all functions $f$, and to show that its expected cost relative to $\opt$ can be bounded in terms of a basic and well-studied complexity measure of boolean functions, $f$'s {\sl total influence}, denoted $\Inf(f)$, which we will soon define in the next subsection:

\begin{theorem}[Our main result]
\label{thm:main} 
Let $f : \zo^n \to \{\pm 1 \}$ be a function, $c \in \N^n$ be a cost vector, and $\eps \in (0,\frac1{2})$ be an error parameter.  There is an efficient $O(\eps)$-error randomized strategy $\mathcal{S}^\star$ for $f$ with expected cost $\Delta_c(\mathcal{S}^\star) \le \opt\cdot \Inf(f)/\purple{\eps^2}$.
\end{theorem}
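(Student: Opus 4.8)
The plan is to analyze the natural adaptive greedy strategy. At any point $\mathcal{S}^\star$ has queried a set of coordinates, inducing a restriction $\rho$; if $f_\rho$ is within $\tau$ of a constant function, meaning $\min(\Prx[f_\rho(\bx)=1],\Prx[f_\rho(\bx)=-1]) \le \tau$, then halt and output that constant; otherwise query the coordinate $i$ maximizing the \emph{influence-to-cost ratio} $\Inf_i(f_\rho)/c_i$. The only place randomness enters is the truncation parameter $\tau$, which I would draw (log-)uniformly from $[\eps, 2\eps]$; this is needed only for the cost bound, not for accuracy. Indeed, accuracy is immediate and deterministic: every leaf of $\mathcal{S}^\star$ carries a restriction $\rho$ with $f_\rho$ within $\tau \le 2\eps$ of the value output there, so averaging over leaves the total error is at most $2\eps = O(\eps)$.

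\textbf{Cost, ingredient one: influence telescoping.} The bound $\Delta_c(\mathcal{S}^\star) \le \opt\cdot\Inf(f)/\eps^2$ is the heart of the matter. The first ingredient is an exact martingale identity. If the input $\bx$ induces the query sequence $i_1,i_2,\dots$ with restrictions $\emptyset = \rho_0 \subset \rho_1 \subset \cdots$, then at each step $\Ex_{\bx}[\Inf(f_{\rho_{t-1}}) - \Inf(f_{\rho_t})] = \Ex_{\bx}[\Inf_{i_t}(f_{\rho_{t-1}})]$, since restricting a coordinate $j \ne i_t$ preserves $\Inf_j$ in expectation over the revealed bit while $\Inf_{i_t}$ is zeroed out; summing over $t$ gives $\Ex_{\bx}[\sum_t \Inf_{i_t}(f_{\rho_{t-1}})] \le \Inf(f)$. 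Because $i_t$ is the argmax ratio at $\rho_{t-1}$, the $t$-th query costs exactly $c_{i_t} = \Inf_{i_t}(f_{\rho_{t-1}})/M(\rho_{t-1})$, where $M(\rho) := \max_i \Inf_i(f_\rho)/c_i$. Hence $\Delta_c(\mathcal{S}^\star) = \Ex_{\bx}[\sum_t \Inf_{i_t}(f_{\rho_{t-1}})/M(\rho_{t-1})]$, and it suffices to lower bound $M(\rho)$ at every non-terminal restriction $\rho$ that $\mathcal{S}^\star$ can reach.

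\textbf{Cost, ingredient two: the local-to-global comparison.} The hard part is precisely this lower bound: at any restriction $\rho$ at which $\mathcal{S}^\star$ does \emph{not} halt --- i.e.\ with $f_\rho$ more than $\tau \ge \eps$ far from constant --- I want $M(\rho) \gtrsim \eps^2/\opt$, which would combine with the telescoping bound to finish the proof. The obstacle is that $M(\rho)$ is a purely local quantity at a possibly deep restriction while $\opt$ is a global quantity, and the optimal $\eps$-error strategy is free to spread its error budget unevenly across the subcubes it reaches. I would prove the bound by charging against the leaves of a fixed optimal $\eps$-error decision tree $T^\star$: on the subcube of a leaf $\ell$ of $T^\star$ the restricted function is close to constant, and combining this with an isoperimetric-type bound together with the inequality ``$T^\star$ must query every pivotal coordinate on $\bx$ or else err on $\bx$ (or its neighbor)'' lets one convert the cost $T^\star$ pays on that subcube into a lower bound on the influence-to-cost ratio encountered along the way. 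The randomized threshold $\tau$ is what makes this charging robust to how $T^\star$ allocates error, and --- together with the distance-to-constant being only $\gtrsim \eps$ at a non-terminal node, passed through a second-moment/isoperimetric step --- is where the $1/\eps^2$ rather than $1/\eps$ is lost; the $\Inf(f)$ factor is just the amortization constant from the telescoping step, quantifying the gap between a greedy local rule and global optimization. Finally I would remark that this loss is essentially unavoidable, via the connection to properly learning decision trees established later in the paper.
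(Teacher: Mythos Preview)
Your ingredient one (influence telescoping) is correct and is exactly what the paper uses. The gap is entirely in ingredient two.

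The pointwise lower bound $M(\rho) \gtrsim \eps^2/\opt$ at every non-halting $\rho$ cannot be established by the route you sketch. Applying your ``must query every pivotal coordinate or else err'' observation to $f_\rho$ and the restriction $(\mathcal{S}_{\opt})_\rho$ of the optimal strategy --- this observation \emph{is} the OSSS inequality, which is the paper's tool --- gives
\[ M(\rho)\ \ge\ \frac{\bias(f_\rho) - \error_{f_\rho}\big((\mathcal{S}_{\opt})_\rho\big)}{\Delta_c\big((\mathcal{S}_{\opt})_\rho\big)}, \]
but neither part of the fraction is controllable pointwise: the optimal $\eps$-error strategy for $f$ is free to concentrate almost all of its error budget, and almost all of its expected cost, on the subcube of a particular $\rho$, so the numerator can be $\le 0$ and the denominator can far exceed $\opt$. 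Randomizing the halting threshold in the narrow window $[\eps,2\eps]$ does nothing to address this; the obstruction is how $\mathcal{S}_{\opt}$ behaves on a fixed subcube, not where your strategy decides to stop.

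The paper circumvents this in two moves that your sketch does not contain. First, it replaces $\opt$ by the \emph{worst-case} cost $\overline{\opt}$ of an $O(\eps)$-error strategy, obtained from the optimal one by Markov truncation at cost $\opt/\eps$; this costs one factor of $1/\eps$ but buys the uniform bound $\Delta_c\big((\mathcal{S}_{\opt})_\rho\big)\le\overline{\opt}$ for every $\rho$, so the denominator can now be pulled out of any average. Second, it abandons bias-based halting for a \emph{budget}-based rule: halt after total cost $B$ regardless of $\bias(f_\rho)$. Now the cost bound is trivial ($\Delta_c = B$) and the work shifts to accuracy. At any fixed budget the current leaves partition $\zo^n$, so the \emph{average} of the OSSS numerator over those leaves is exactly $\error_f(\mathcal{S}^\star_B)-\error_f(\mathcal{S}_{\opt})\ge 2\eps-\eps=\eps$ as long as $\mathcal{S}^\star_B$ still has error $\ge 2\eps$ (the second factor of $1/\eps$). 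Combined with your telescoping identity, this says the leaf-averaged total influence drops at rate at least $\eps/\overline{\opt}$ per unit of budget, so by $B=\overline{\opt}\cdot\Inf(f)/\eps\le\opt\cdot\Inf(f)/\eps^2$ the error must be below $2\eps$. Your bias-based strategy is precisely the paper's warmup $\mathcal{S}^\diamond$, which the paper only manages to bound against the worst-case \emph{zero-error} optimum $\overline{\opt}(f,c,0)$; the switch to budget-based halting is what lets the averaging go through against $\opt(f,c,\eps)$.
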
 

\violet{\Cref{thm:main} enables us to leverage a large body of results on the total influence of various function classes to derive guarantees on the performance of our strategy when invoked on functions from the corresponding class.  As we will show in~\Cref{sec:example}, this yields new guarantees that add to existing ones for a number of function classes that have been studied in this context, as well as new guarantees for new classes.}

\violet{We complement~\Cref{thm:main} with two lower bounds, one showing the optimality of the analysis of our strategy $\mathcal{S}^\star$, and the other showing that designing a different strategy that improves upon the parameters that we achieve with $\mathcal{S}^\star$ will require making progress on the longstanding open problem of properly learning decision trees.}



\subsection{Background and motivation}
\label{sec:background} 
To motivate our strategy, we first consider the unit-cost setting ($c_i = 1$ for all coordinates $i \in [n]$) and ask the simplest possible question: ``{\sl What should our first query be?}\,"   An answer to this question naturally suggests a corresponding greedy strategy: restrict $f$ according to the response to the first query, and recurse, i.e.~ask the same question of the restricted function.  Intuitively, we would like to query the most ``important" coordinate of the unknown input $\underline{x}$, the one for which knowing its value goes the furthest towards determining $f$'s value on $\underline{x}$, and hence minimizes the number of subsequent queries.  

Determining the {\sl optimal} such query is $\mathsf{NP}$-hard---minimization of decision trees is $\mathsf{NP}$-hard~\cite{LR76,ZB00}, even to approximate~\cite{Sie08,AH12}---so we seek {\sl efficiently computable} ``proxies" for how important  a coordinate is.  For example, if $f$ is itself represented as a decision tree, a natural candidate would be the coordinate that is queried in the most branches of the tree;  if $f$ is represented as a DNF formula, a natural candidate would be the coordinate that appears in the most terms; if $f$ is represented as a halfspace $\sign(\sum_{i=1}^n w_i x_i - \theta)$, a natural candidate would be the coordinate $i$ for which the corresponding weight $w_i$ has the largest magnitude; and so on.   Indeed, a number of previous works analyze these very strategies, as well as generalizations of them to the setting of general costs.  Note that these strategies are all {\sl representation specific}: for example, the strategy for DNF formulas cannot be carried out if $f$ is represented as a halfspace.

\pparagraph{Influence and a representation-independent strategy.}  In contrast to the aforementioned examples, our strategy will be based on a notion of ``importance"  that is {\sl representation independent}, and hence applicable to all functions $f$: 

\begin{definition}[Influence]
\label{def:influence} 
The {\sl influence of the variable $x_i$ on a function $f : \zo^n \to \{\pm 1\}$} is defined to be \[ \Inf_i(f) \coloneqq \Prx_{\bx\sim \zo^n}[f(\bx) \ne f(\bx^{\oplus i})],\] 
where $\bx$ is drawn uniformly at random, and $\bx^{\oplus i}$ denotes $\bx$ with its $i$-th coordinate flipped.   The {\sl total influence} of $f$ (or simply its {\sl influence}) is defined to be $\Inf(f) \coloneqq \sum_{i=1}^n \Inf_i(f)$.  \end{definition}


\violet{The influence of variables and the total influence of functions are by now mainstay notions in theoretical computer science, arising in a range of areas across both algorithms and complexity theory; for a detailed treatment, see O'Donnell's book~\cite{ODBook}.} With this definition in hand, we can now state a simple strategy, $\mathcal{S}^{\star}_{\textsc{UnitCost}}$, for the unit-cost setting:  query the coordinate of $\underline{x}$ that corresponds to the most influential variable of $f$, restrict $f$ accordingly, and recurse.  
\violet{Our actual strategy for the setting of general costs, which we will soon describe in the next subsection, is a natural generalization of $\mathcal{S}^{\star}_{\textsc{UnitCost}}$ that takes into account the both the influence of a variable and its cost.} 


Despite the apparent natural fit, the notion of influence, along with its associated suite of powerful techniques and results, seem to have been surprisingly underutilized for the problem of querying priced information. A main goal of this work is to bridge this connection.

\subsection{Our results} 
\label{sec:our-results}

{\bf Our strategy.} We analyze a natural generalization of $\mathcal{S}^\star_{\textsc{UnitCost}}$ to the setting of general costs: rather than iteratively querying the most influential variable, our general strategy $\mathcal{S}^\star$ will query the variable with the highest {\sl cost-weighted} influence: 

\begin{figure}[H]
    \captionsetup{width=.9\linewidth}
    \begin{tcolorbox}[colback = white,arc=1mm, boxrule=0.25mm]
        \vspace{3pt} 
        
        Query strategy $\mathcal{S}^\star(f,c,\eps)$ for an unknown input $\underline{x}$:  \vspace{6pt} 
        
         \ \ Define $B \coloneqq \opt \cdot \Inf(f)/\purple{\eps^2}$ \vspace{4pt} 

        \ \ Initialize {\tt counter} $\coloneqq 0$\vspace{4pt}  
        
        \ \ while {\tt counter} $ < B$:  \vspace{4pt} 
        
        
        \ \ \ \ \ \ Let $i \in [n]$ be the coordinate with the largest cost-weighted influence:
        \[ \frac{\Inf_i(f)}{c_i} \ge \frac{\Inf_j(f)}{c_j} \quad \text{for all $j\in [n]$}.\] 
        \ \ \ \ \ \ if ${\tt counter} + c_i \geq B$:\vspace{4pt} 
        
        \ \ \ \ \ \ \ \ \ \ \ \ With probability $\frac{B - {\tt counter}}{c_i}$:\vspace{4pt}
        
        \ \ \ \ \ \ \ \ \ \ \ \ \ \ \ \ \ \ Query the $i$-th coordinate of $\underline{x}$ and output $\sign(\E[f_{x_i = \underline{x}_i}])$. \vspace{4pt}

        \ \ \ \ \ \ \ \ \ \ \ \ With probability $1 - \frac{B - {\tt counter}}{c_i}$:\vspace{4pt}
        
        \ \ \ \ \ \ \ \ \ \ \ \ \ \ \ \ \ \ Output $\sign(\E[f])$. \vspace{4pt}
        
        \ \ \ \ \ \ Query the $i$-th coordinate of $\underline{x}$, and restrict $f$ accordingly: $f = f_{x_i = \underline{x}_i}$  \vspace{4pt}
        
        \ \ \ \ \ \ {\tt counter+=} $c_i$  
        
        \end{tcolorbox}
    \caption{Our query strategy $\mathcal{S}^\star$ takes as input a representation of $f : \zo^n\to \{\pm 1\}$, a cost vector $c\in \N^n$, and an error parameter $\eps$.  It has query access to an unknown input~$\underline{x}$, and its goal is to compute $f(\underline{x})$ in a computationally- and cost-efficient manner.}
    \label{fig:our-strategy}
\end{figure}


 






As described in~\Cref{fig:our-strategy}, it is not yet clear that $\mathcal{S}^\star$ can be implemented in a computationally-efficient manner, for two reasons. First, in order to define the parameter `$B$', the strategy has to ``know" the value of $\opt$ and $\Inf(f)$; second, in order to determine which variable has the highest cost-weighted influence, the strategy has to be able to compute variable influences efficiently.  The latter is handled easily using the fact that the influence of variables can be estimated to high accuracy via a standard random sampling argument (recall~\Cref{def:influence}).  The former can also be handled in a fairly straightforward manner:  we describe an efficient randomized algorithm for computing `$B$' in~\Cref{sec:stop}.

\pparagraph{Our guarantees on the performance of $\mathcal{S}^\star$.} We now introduce terminology to refer to the performance of the {\sl optimal} strategy, which will serve as our benchmark for the evaluation of our strategy $\mathcal{S}^\star$:  

\begin{definition}[$\opt$]
\label{def:opt} 
Let $f : \zo^n \to \{ \pm 1\}$ be a function, $c \in \N^n$ be a cost vector, and $\eps \in (0,\frac1{2})$ be an error parameter.  We write $\opt(f,c,\eps)$ to denote the expected cost, relative to $c$, of the optimal $\eps$-error strategy for $f$: 
\[ \opt(f,c,\eps) \coloneqq \min\{\, \Delta_c(\mathcal{S}) \colon \text{$\mathcal{S}$ is an $\eps$-error strategy for $f$}\,\},\] 
where $\Delta_c(\mathcal{S}) \coloneqq \E[\cost_c(\mathcal{S},\bx)]$.  When $f,c,$ and $\eps$ are clear from context, we simply write $\opt$. 
\end{definition} 

We can now (re)state our main result: 

\begin{reptheorem}{thm:main}
Let $f : \zo^n \to \{\pm 1 \}$ be a function, $c \in \N^n$ be a cost vector, and $\eps \in (0,\frac1{2})$ be an error parameter.  The strategy $\mathcal{S}^\star$ is an efficient randomized $O(\eps)$-error strategy for~$f$ with expected cost $\Delta_c(\mathcal{S}^\star) \le \opt\cdot \Inf(f)/\purple{\eps^2}$.   
\end{reptheorem}


An appealing feature of our strategy $\mathcal{S}^\star$ is that it in fact only requires blackbox query access to~$f$, rather than access to a full representation of $f$ as in previous works.  This is a corollary of the already-discussed fact that influence is a representation-independent notion. 

\pparagraph{The optimality of our analysis, and the possibility of improved strategies.} We complement~\Cref{thm:main} with two lower bounds.  The first concerns the optimality of our analysis of the strategy $\mathcal{S}^\star$:

\begin{claim}[Optimality of our analysis] 
\label{claim:optimality}
Consider the unit-cost setting, $c = 1^n$.  For any $\eps \in (0, \frac{1}{4})$, there is a function $f : \zo^n \to \{ \pm 1\}$ such that our strategy $\mathcal{S}^\star$ has to incur expected cost $\opt \cdot \tilde{\Omega}_\eps(\Inf(f))$ in order to achieve error $\eps$. 
\end{claim}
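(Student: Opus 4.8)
The plan is to exhibit a function $f$ on which the cost-weighted (here, unit-cost) influence strategy is ``greedily misled'': it keeps querying variables that individually have the largest influence, but none of these queries makes much progress towards determining $f$'s value, so the accumulated cost $B = \opt \cdot \Inf(f)/\eps^2$ really does get spent. Since all costs are $1$, the claim amounts to showing that $\mathcal{S}^\star$ must query roughly $\opt \cdot \tilde\Omega_\eps(\Inf(f))$ coordinates before it can stop with error $\eps$, and that the optimal $\eps$-error strategy has cost $\opt$ that is small by comparison. The natural candidate is an \textsc{Address}/\textsc{Mux}-type or ``hidden relevant block'' construction: a small set of ``addressing'' coordinates of low influence that point to which of many ``payload'' coordinates actually determines $f$, padded with a large number of high-influence but ultimately irrelevant dummy coordinates (e.g.\ copies of a parity or near-balanced junta on disjoint variable sets) so that the most-influential variable is always one of the dummies. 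Then $\Inf(f)$ is large (dominated by the dummies), $\opt$ is small (the optimal strategy reads the addressing bits and then the single relevant payload bit, for cost $\approx \log n + 1$), but $\mathcal{S}^\star$ wastes all its budget on dummies and never consults the addressing bits until very late.

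First I would fix the construction precisely and compute the three quantities. (i) $\Inf(f)$: design the dummy part so that $\Inf(f) = \Theta(n)$ (or whatever growing quantity we want), and arrange that each dummy variable has influence strictly larger than any non-dummy variable, so $\mathcal{S}^\star$'s tie-breaking is forced. (ii) $\opt(f,1^n,\eps)$: upper-bound it by the explicit clever strategy — read the $O(\log m)$ addressing bits, then read the one indicated relevant bit — giving $\opt = O(\log n)$, and note (for the ``$\opt\cdot$'' bookkeeping) that a lower bound on $\opt$ of the same order is not even needed, since we only need $\opt$ to be small relative to $\Inf(f)$. (iii) $B = \opt\cdot \Inf(f)/\eps^2 = \tilde\Omega_\eps(\Inf(f))\cdot\opt$, so by design the while-loop budget is exactly of the claimed magnitude; the real content is showing $\mathcal{S}^\star$ actually runs the loop (nearly) to exhaustion, i.e.\ that before querying enough dummies the restricted function's bias $|\E[f_{\text{restriction}}]|$ stays below $1-2\eps$, so outputting $\sign(\E[f_{\cdot}])$ would err on more than an $\eps$-fraction. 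This is where I would make the dummies ``balanced'' enough (e.g.\ a parity, which has zero bias under every partial restriction that doesn't fix all its bits) that restricting any number $t \ll \Inf(f)$ of dummy coordinates leaves $f$ roughly balanced, hence forces the loop to continue; and I'd choose the number of payload/dummy blocks so that $\mathcal{S}^\star$ is very unlikely (or deterministically unable, via the influence ordering) to stumble onto the addressing bits early.

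The main obstacle — and the step I'd spend the most care on — is item (iii): arguing rigorously that $\mathcal{S}^\star$'s myopic greedy choices keep it away from the ``informative'' coordinates for $\tilde\Omega_\eps(\Inf(f))$ steps, i.e.\ that after restricting any prefix of its greedy sequence the residual function is still $\eps$-far from constant. With parity dummies on disjoint blocks this is clean: influences of dummy variables stay maximal and equal as long as their block isn't fully fixed, and a parity block of width $w$ contributes $w$ to $\Inf$ but contributes $0$ bias until all $w$ of its bits are queried, so to ``use up'' influence $\Theta(n)$ the strategy must ask $\Theta(n)$ dummy queries while the bias remains $0$. One has to handle the interaction with the addressing/payload part — making sure the addressing variables genuinely have smaller influence than any not-yet-exhausted dummy variable throughout the greedy run (achievable by padding the dummy blocks so the per-variable dummy influence exceeds the maximum possible influence of an address or payload variable), and making sure the extra $1/\eps^2$ versus $1/\eps$ in $B$ is captured — for which I would either appeal to a slightly more refined ``hard-to-stop'' gadget (e.g.\ a weighted/noisy coordinate forcing the $\eps^2$ rather than $\eps$ dependence, matching the $\tilde\Omega_\eps$ notation) or absorb the lost factor into the $\tilde\Omega_\eps(\cdot)$ and the ``$\frac14$'' restriction on $\eps$. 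Finally I'd verify the randomized stopping rule in the last loop iteration only changes the count by $O(1)$ in expectation and doesn't rescue the error, completing the bound $\Delta_{1^n}(\mathcal{S}^\star) = \opt\cdot\tilde\Omega_\eps(\Inf(f))$.
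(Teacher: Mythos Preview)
Your construction has a genuine gap at the heart of step (ii). You want dummy variables that simultaneously (a) have influence at least as large as every ``informative'' variable, so that $\mathcal{S}^\star$ is drawn to them first, and (b) can be skipped by the optimal strategy, so that $\opt = O(\log n)$. These two requirements are in direct conflict. Concretely, take your suggestion of parity blocks on disjoint variable sets. For a parity variable to have influence $\Theta(1)$ in $f$, the parity must be combined into $f$ in a way that makes the output genuinely depend on it on a constant fraction of inputs --- essentially XOR. But then any strategy that leaves even one bit of that parity block unqueried has output independent of the parity and hence errs with probability $\tfrac12$ on a constant fraction of inputs; so every $\eps$-error strategy for $\eps < \tfrac14$ must (with probability $1-O(\eps)$) query \emph{all} dummy bits, forcing $\opt = \Omega(\text{\#dummies}) = \Omega(\Inf(f))$. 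Your ``clever'' strategy of reading only the address bits and one payload bit does not achieve error $\eps$: its output is XORed with an unknown balanced bit. If instead you combine the dummies so that the optimal strategy can legitimately ignore them (e.g.\ behind a rarely-taken branch, or via OR/AND with a biased function), then their influence in $f$ drops proportionally and they no longer dominate the address bits, so $\mathcal{S}^\star$ is not misled. Either way the claimed separation $\opt = O(\log n)$, $\Inf(f) = \Theta(n)$ collapses.

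The paper's proof takes a completely different route: it does not try to plant irrelevant high-influence variables at all. Instead it invokes, as a black box, a hardness result from~\cite{BLT20} (their Theorem~4(b)): there is a function $f$ computed by a size-$s$ decision tree such that \emph{any} strategy that recursively queries the most influential variable and achieves error $\le 2\eps$ must, when written as a decision tree, have size $s^{\tilde\Omega_\eps(\log s)}$. Since a size-$s$ tree has $\Inf(f)\le \log s$ and $\opt \le \log s$, a hypothetical expected cost of $\opt\cdot \tilde o_\eps(\Inf(f)) = \tilde o_\eps(\log^2 s)$ would (after truncating at depth $\Delta/\eps$ via Markov) yield a greedy tree of size $2^{\tilde o_\eps(\log^2 s)} = s^{\tilde o_\eps(\log s)}$, contradicting the lower bound. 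The point is that in the~\cite{BLT20} instance every variable the greedy strategy touches is relevant; the loss comes from querying relevant variables in a systematically bad \emph{order}, not from being distracted by dummies. Your proposal would need a fundamentally different gadget of that flavor to go through.
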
 

While~\Cref{claim:optimality} shows that the parameters of~\Cref{thm:main} are optimal for our strategy $\mathcal{S}^\star$, it does not rule out the possibility that a {\sl different} strategy achieves better parameters than those of~\Cref{thm:main}.  Our next claim shows that if such a strategy, like $\mathcal{S}^\star$, only uses blackbox query access to the function, then it would also improve upon the state of the art for {\sl properly learning decision trees}, a central open problem in learning theory:\footnote{A learning algorithm is {\sl proper} if it returns a hypothesis that falls within the concept class.  An algorithm for properly learning decision trees is therefore one that returns a decision tree as its hypothesis.}  

\begin{claim}[Improving on~\Cref{thm:main} $\Rightarrow$ New algorithms for properly learning decision trees]  
\label{claim:breakthrough}
Consider the unit-cost setting, $c = 1^n$.  Suppose that for all functions $f : \zo^n \to \{\pm 1\}$, there is an efficient $O(\eps)$-error strategy $\mathcal{S}$ for $f$ with expected cost $\opt \cdot o_\eps(\Inf(f))$.   Furthermore, suppose that $\mathcal{S}$ only requires blackbox query access to $f$.  Then there is an $n^{o(\log n)}$-time query algorithm for properly PAC learning polynomial-size decision trees under the uniform distribution.  
\end{claim}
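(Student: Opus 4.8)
The plan is to show the contrapositive in spirit: a strategy beating the $\Inf(f)$ overhead yields a proper decision-tree learner by running the strategy to completion and reading off the decision tree it implicitly builds. Suppose $f$ is a decision tree of size $s = \poly(n)$. The key facts I would invoke are: (i) the optimal $\eps$-error strategy for $f$ has expected cost $\opt = O(\log(s/\eps))$, since the depth-$O(\log(s/\eps))$ truncation of the tree itself is an $\eps$-error strategy of that expected cost in the unit-cost setting (a standard fact: pruning a size-$s$ tree at depth $\log(s/\eps)$ discards at most an $\eps$-fraction of the mass); and (ii) $\Inf(f) \le \log s$ for any size-$s$ decision tree (a classical bound, e.g.\ from \cite{ODBook}). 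Hence the hypothesized strategy $\mathcal{S}$, with only blackbox query access to $f$, runs with expected cost $\opt \cdot o_\eps(\Inf(f)) = o_\eps(\log^2(s/\eps))$, and therefore has worst-case depth $2^{O(\log^2(s/\eps))} = s^{o(\log s)}$ after a Markov-type truncation that increases the error by only a constant factor.

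Next I would turn $\mathcal{S}$ into a decision tree hypothesis. Since $\mathcal{S}$ only makes blackbox queries to $f$, we can simulate all of its root-to-leaf branches: the set of query-answer transcripts of $\mathcal{S}$ forms a decision tree $T_{\mathcal{S}}$ of size at most $2^{\,\mathrm{depth}(\mathcal{S})} = s^{o(\log s)}$, with each leaf labeled by $\mathcal{S}$'s output on that branch. Because $\mathcal{S}$ is an $O(\eps)$-error strategy for $f$, the tree $T_{\mathcal{S}}$ satisfies $\Prx_{\bx}[T_{\mathcal{S}}(\bx) \ne f(\bx)] = O(\eps)$. The one subtlety is that $\mathcal{S}$ is randomized and we need a fixed hypothesis: I would fix the randomness by an averaging argument (some setting of $\mathcal{S}$'s internal coins achieves error at most the expectation $O(\eps)$, over $\bx$), or alternatively estimate the error of a few sampled coin-settings on fresh labeled examples and keep the best. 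Either way, in time $s^{o(\log s)} = n^{o(\log n)}$ (using $s = \poly(n)$) we output an explicit decision tree of size $n^{o(\log n)}$ that is $O(\eps)$-close to $f$ under the uniform distribution, using only membership queries to $f$ --- exactly an $n^{o(\log n)}$-time query algorithm for properly PAC learning $\poly(n)$-size decision trees under the uniform distribution (rescaling $\eps$ by the hidden constant).

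The main obstacle I anticipate is not any single step but making the quantifiers line up cleanly between ``$o_\eps(\Inf(f))$ overhead'' and ``$n^{o(\log n)}$ runtime.'' Concretely, I need: $\opt = O(\log(s/\eps))$ to hold for decision trees (clean), $\Inf(f) = O(\log s)$ (clean), and the product of an $o_\eps(1)$ factor with $\log^2(s/\eps)$ to convert, after exponentiating an expected-to-worst-case truncation, into $s^{o(\log s)}$ --- this last conversion is where one must be careful that the $o_\eps(\cdot)$ is uniform enough (for fixed $\eps$, treating it as $\eps$-dependent constant) that $2^{o_\eps(\log^2 s)} = s^{o_\eps(\log s)}$, which is $n^{o(\log n)}$ for $s = \poly(n)$. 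A secondary point to handle with care is the truncation from expected to worst-case cost: Markov's inequality lets us abort $\mathcal{S}$ once it has spent, say, $O(1)$ times its expected cost, incurring only an additional constant-fraction error, which is absorbed into the $O(\eps)$; I would state this as a short lemma and then the rest is bookkeeping.
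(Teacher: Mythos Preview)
Your proposal is correct and follows essentially the same approach as the paper: bound $\opt \le \log(s/\eps)$ by truncating the size-$s$ tree, bound $\Inf(f) \le \log s$, apply Markov to convert expected cost into a worst-case depth bound, and then build out the resulting decision tree in time polynomial in its size $2^{o_\eps(\log^2 s)} = n^{o(\log n)}$. One small slip to fix: truncating at ``$O(1)$ times the expected cost'' only bounds the added error by a constant, not by $O(\eps)$; the paper (and your first paragraph, implicitly) truncates at depth $\Delta/\eps$ so that Markov gives an additional $\eps$ error, and since $\eps$ is treated as a constant this still yields size $2^{\Delta/\eps} = n^{o_\eps(\log n)}$.
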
 

The current fastest algorithm for properly learning decision trees runs in time $n^{O(\log n)}$ and is due to Ehrenfeucht and Haussler~\cite{EH89}. The algorithm of~\Cref{claim:breakthrough} would therefore represent the first improvement over the state of the art in three decades.\footnote{As our proof of~\Cref{claim:breakthrough} will show, the claimed implication holds even under fairly strong  assumptions about $f$: it suffices to design such a strategy for $f$'s that are computable by $\poly(n)$-size decision trees.}  

Even if one considers an $n^{o(\log n)}$ time algorithm for properly learning decision trees as being out of reach for now (or simply impossible),~\Cref{claim:breakthrough} points to a concrete avenue for future work: to improve the parameters of~\Cref{thm:main} by exploiting the fact that the strategy has access to a representation of $f$.  For many representation classes, it is reasonable to expect that the representation of $f$ simplifies significantly as queries are made and $f$ is restricted accordingly.  This is an aspect that is not taken into account in our current analysis, and could lead to improved parameters; we view this as a concrete avenue for future work.


\subsection{Example applications of~\Cref{thm:main} \violet{and comparison with prior work}} 
\label{sec:example} 

In this subsection we give a few example applications of~\Cref{thm:main}. By combining it with known bounds on the total influence of various function classes, we show that our strategy $\mathcal{S}^\star$ \violet{achieves new guarantees that complement existing ones for a number of function classes that have been studied in this context, as well as new guarantees for new classes.}

We begin with {\sl DNF formulas}, a well-studied function class in this literature on query strategies for priced information.  Combining~\Cref{thm:main} with a classic result of Boppana~\cite{Bop97} on the total influence of DNF formulas, we get: 

\begin{corollary}[DNF formulas] 
\label{cor:DNF} 
Let $f: \zo^n \to \{ \pm 1\}$ be a $t$-term DNF formula.  For any cost vector $c \in \N^n$ and error parameter $\eps \in (0,\frac1{2})$, our strategy $\mathcal{S}^\star$ is an $O(\eps)$-error strategy for $f$ with expected cost $\Delta_c(\mathcal{S}^\star) \le \opt \cdot O(\log t)/\purple{\eps^2}$. 
\end{corollary}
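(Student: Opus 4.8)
The plan is to obtain Corollary~\ref{cor:DNF} as an immediate consequence of Theorem~\ref{thm:main} together with a classical bound on the total influence of DNF formulas. Theorem~\ref{thm:main} already supplies, for every $f : \zo^n \to \{\pm 1\}$, every cost vector $c \in \N^n$, and every $\eps \in (0,\tfrac{1}{2})$, a strategy $\mathcal{S}^\star$ that is $O(\eps)$-error and has expected cost $\Delta_c(\mathcal{S}^\star) \le \opt \cdot \Inf(f)/\eps^2$. So all that remains is to bound $\Inf(f)$ when $f$ is a $t$-term DNF and substitute.

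For that bound I would invoke Boppana's theorem~\cite{Bop97} on the average sensitivity of bounded-depth circuits, specialized to depth two: a DNF with $t$ terms — a depth-two circuit of size $t+1$ — has average sensitivity $O(\log t)$. Since the average sensitivity $\Ex_{\bx}[\#\{i : f(\bx) \ne f(\bx^{\oplus i})\}]$ equals $\sum_{i=1}^n \Inf_i(f) = \Inf(f)$ in the notation of Definition~\ref{def:influence}, this gives $\Inf(f) = O(\log t)$ (with the usual convention handling $t = 1$, where $\Inf(f) \le 1$). Plugging $\Inf(f) = O(\log t)$ into the cost bound of Theorem~\ref{thm:main} yields $\Delta_c(\mathcal{S}^\star) \le \opt \cdot O(\log t)/\eps^2$, and the $O(\eps)$-error guarantee carries over verbatim — exactly the statement of the corollary. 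A short self-contained proof of the influence bound is also available (for instance by a random-restriction argument, or by directly summing the per-term contributions), but citing Boppana is the cleanest route.

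I do not expect a genuine obstacle here: both ingredients are off-the-shelf and the argument is a one-line substitution. The only care needed is bookkeeping — matching the average-sensitivity convention of~\cite{Bop97} to the total-influence convention of Definition~\ref{def:influence} (they coincide), and noting that the extra top OR gate only changes the circuit size by one, which the $O(\cdot)$ absorbs. It is worth recording that $O(\log t)$ is tight up to constant factors: the Tribes function on $n$ variables has $\Theta(n/\log n)$ terms and total influence $\Theta(\log n)$, so no sharper version of the corollary follows from this approach.
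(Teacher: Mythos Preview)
Your proposal is correct and matches the paper's approach exactly: the paper also derives the corollary by combining Theorem~\ref{thm:main} with Boppana's $O(\log t)$ bound on the total influence of $t$-term DNFs, with no additional argument.
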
 

A number of prior works proposed and analyzed query strategies for various subclasses of DNF formulas.   Kaplan, Kushilevitz, and Mansour~\cite{KKM05} studied read-once DNF formulas and monotone ``CDNF formulas" (small DNF formulas that also have small CNF representations); Deshpande, Hellerstein, and Kletenik~\cite{DHK14} studied general (not-necessarily-monotone) CDNF formulas; Allen, Hellerstein, Kletenik, and {\"U}nl{\"u}yurt~\cite{AHKU17} studied monotone $k$-term DNF formulas and monotone width-$k$ DNF formulas.   To our knowledge,~\Cref{cor:DNF} is the first result that applies to all DNF formulas.  (While~\cite{KKM05,DHK14} also achieve an approximation ratio of $O(\log t)$ for CDNF formulas, \Cref{cor:DNF} is incomparable to these results since~\cite{KKM05,DHK14} work with different formulations of the problem that are incomparable to ours; we give an overview of these alternative formulations in the next subsection.)

A natural generalization of DNF formulas is the class of {\sl small-depth circuits} over the standard $\{ {\textsc{And}}, {\text{Or}}, {\textsc{Not}} \}$ basis, with DNF formulas being the special case of depth-two circuits.  The total influence of such circuits is also well understood~\cite{LMN93,Bop97}; combining~\Cref{thm:main} with these bounds yields the first non-trivial guarantee for functions computable by small-depth circuits:

\begin{corollary}[Small-depth circuits; generalization of~\Cref{cor:DNF}] 
Let $f: \zo^n \to \{ \pm 1\}$ be a size-$t$ depth-$d$ circuit.  For any cost vector $c \in \N^n$ and error parameter $\eps \in (0,\frac1{2})$, our strategy $\mathcal{S}^\star$ is an $O(\eps)$-error strategy for $f$ with expected cost $\Delta_c(\mathcal{S}^\star) \le \opt \cdot O(\log t)^{d-1}/\purple{\eps^2}$. 
\end{corollary}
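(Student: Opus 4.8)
The plan is to obtain this corollary by combining \Cref{thm:main} with a classical bound on the total influence of small-depth circuits. \Cref{thm:main} already supplies everything except the cost bound: it guarantees that $\mathcal{S}^\star$ is an efficient randomized $O(\eps)$-error strategy for \emph{every} $f$, with $\Delta_c(\mathcal{S}^\star) \le \opt \cdot \Inf(f)/\eps^2$. (The efficiency only requires evaluating $f$ and estimating quantities such as $\E[f]$ and the variable influences by random sampling, all of which are available given a circuit for $f$.) So the one remaining ingredient is the bound $\Inf(f) = O(\log t)^{d-1}$ for a size-$t$ depth-$d$ circuit $f$ over the $\{\textsc{And}, \textsc{Or}, \textsc{Not}\}$ basis.

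For this I would cite Linial, Mansour, and Nisan~\cite{LMN93} together with Boppana~\cite{Bop97}, who establish exactly this bound on the total influence (average sensitivity) of depth-$d$ size-$t$ circuits; the $d = 2$ case is the $O(\log t)$ bound on $t$-term DNF formulas already used in \Cref{cor:DNF}. Alternatively one may recall the proof in one line: for a $p$-random restriction with $p = \Theta(1/\log t)$ one has $\Inf(f) = \tfrac{1}{p}\cdot\E_{\rho}[\Inf(f_\rho)]$, and by H{\aa}stad's switching lemma $f_\rho$ is, with all but negligible probability, equivalent to a depth-$(d-1)$ size-$t$ circuit; inducting on $d$ (with base case $\Inf(f) = O(1)$ for $d = 1$) then gives $\Inf(f) = (1/p)\cdot O(\log t)^{d-2} = O(\log t)^{d-1}$.

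Substituting $\Inf(f) = O(\log t)^{d-1}$ into the cost bound of \Cref{thm:main} yields $\Delta_c(\mathcal{S}^\star) \le \opt \cdot O(\log t)^{d-1}/\eps^2$, and the $O(\eps)$-error guarantee carries over verbatim. There is no real obstacle here: the corollary is a direct composition of two known facts. The only points meriting mild care are using the correct exponent $d - 1$ (rather than $d$), and noting that the switching-lemma bound applies to circuits that need not be monotone and may have unbounded fan-in, so that it covers the full generality of the corollary.
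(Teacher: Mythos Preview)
Your proposal is correct and matches the paper's approach exactly: the paper states this corollary as an immediate consequence of \Cref{thm:main} combined with the total-influence bound for small-depth circuits from~\cite{LMN93,Bop97}, without further argument. Your additional one-line switching-lemma sketch is more detail than the paper provides, but is consistent with those cited results.
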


  Note that the approximation factor of $O(\log t)^{d-1}$ is $\polylog(n)$ for $t = \poly(n)$ and $d = O(1)$, the setting of parameters that corresponds to the circuit class $\mathsf{AC}^0$.  
 
 Like DNF formulas, the class of {\sl halfspaces} $f(x) = \sign(w\cdot x-\theta)$ has also received significant attention in the literature~\cite{FP04,DHK14,GGHK18}.  However, to our knowledge, there has been no non-trivial strategies for the {\sl intersections of halfspaces}: functions of the form $f(x) = h_1(x) \wedge \cdots \wedge h_m(x)$ where each $h_i(x)$ is a halfspace.  These functions correspond to $\zo$-integer programs with $m$ constraints and also to $m$-facet polytopes, and are a basic object of study in computer science. Combining~\Cref{thm:main} with a result of Kane~\cite{Kan14-polytopes} on the total influence of such functions, we get:

\begin{corollary}[Intersections of halfspaces] 
\label{cor:polytopes} 
Let $f: \zo^n \to \{ \pm 1\}$ be the intersection of $m$ halfspaces.  For any cost vector $c \in \N^n$ and error parameter $\eps \in (0,\frac1{2})$, our strategy $\mathcal{S}^\star$ is an $O(\eps)$-error strategy for $f$ with expected cost $\Delta_c(\mathcal{S}^\star) \le \opt \cdot O(\sqrt{n\log m})/\purple{\eps^2}$. 
\end{corollary}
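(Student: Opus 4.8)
The plan is to obtain the corollary as an immediate consequence of \Cref{thm:main}, combined with a known bound on the total influence of an intersection of halfspaces. \Cref{thm:main} already guarantees, for \emph{every} function $f$, that $\mathcal{S}^\star$ is an efficient $O(\eps)$-error strategy with $\Delta_c(\mathcal{S}^\star)\le \opt\cdot\Inf(f)/\eps^2$; moreover $\mathcal{S}^\star$ interacts with $f$ only through (random-sampling estimates of) its variable influences, so no structural property of the representation is needed. Hence, once $f=h_1\wedge\cdots\wedge h_m$ is promised to be an intersection of $m$ halfspaces, the only thing left to do is to upper bound $\Inf(f)$ by $O(\sqrt{n\log m})$.

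For this step I would invoke Kane's theorem on the average sensitivity of intersections of halfspaces~\cite{Kan14-polytopes}. Recalling from \Cref{def:influence} that
\[
\Inf(f)=\sum_{i=1}^n\Inf_i(f)=\Ex_{\bx\sim\zo^n}\Big[\#\{i\in[n]:f(\bx)\neq f(\bx^{\oplus i})\}\Big]
\]
is precisely the average sensitivity of $f$, Kane's result gives $\Inf(f)\le O(\sqrt{n\log m})$ for every $f:\zo^n\to\{\pm1\}$ of this form. Substituting this into the cost bound of \Cref{thm:main} yields $\Delta_c(\mathcal{S}^\star)\le \opt\cdot O(\sqrt{n\log m})/\eps^2$, while the efficiency and $O(\eps)$-error guarantees are inherited verbatim; this is exactly what the corollary asserts.

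The main point to note is that essentially all of the content lives inside the cited theorem, which we use as a black box: for $m=1$ it already recovers the tight $\Theta(\sqrt n)$ average sensitivity of a single halfspace (witnessed by $\mathrm{Majority}_n$), and Kane's contribution is that taking an intersection of $m$ constraints costs only an extra $\sqrt{\log m}$ factor. On our end the only things to check are bookkeeping: that the notion of sensitivity used in~\cite{Kan14-polytopes} coincides with \Cref{def:influence} up to the affine relabeling $\{0,1\}^n\leftrightarrow\{\pm1\}^n$ of the domain (which maps halfspaces to halfspaces and fixes the combinatorial cube, hence preserves each $\Inf_i$), and that the hidden constant is absolute, i.e.\ independent of $n$ and $m$. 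Both are immediate, so there is no genuine obstacle here beyond correctly importing the average-sensitivity bound.
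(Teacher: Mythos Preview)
Your proposal is correct and matches the paper's approach exactly: the corollary is stated as an immediate consequence of \Cref{thm:main} together with Kane's $O(\sqrt{n\log m})$ bound on the total influence of an intersection of $m$ halfspaces, with no additional argument given. Your bookkeeping remarks about the domain relabeling and absolute constants are fine and go slightly beyond what the paper spells out.
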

  
 Note that because of the logarithmic dependence on `$m$' in~\Cref{cor:polytopes}, the approximation factor remains $\tilde{O}(\sqrt{n})$ for any $m = \poly(n)$. 
 
  As our final example, we combine~\Cref{thm:main} with bounds on the total influence of {\sl monotone} functions~\cite{BT96} to get: 

\begin{corollary}[Monotone functions]
Let $f: \zo^n \to \{ \pm 1\}$ a monotone function.  For any cost vector $c \in \N^n$ and error parameter $\eps \in (0,\frac1{2})$, our strategy $\mathcal{S}^\star$ is an $O(\eps)$-error strategy for $f$ with expected cost $\Delta_c(\mathcal{S}^\star) \le \opt \cdot O(\sqrt{n})/\purple{\eps^2}$. 
\end{corollary}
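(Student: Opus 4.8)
The plan is to read off the corollary directly from \Cref{thm:main} by plugging in a bound on the total influence of monotone functions. First I would invoke the classical fact --- stated in \cite{BT96} and a standard exercise in \cite{ODBook} --- that every monotone $f : \zo^n \to \{\pm 1\}$ satisfies $\Inf(f) \le \sqrt{n}$. Then the cost bound $\Delta_c(\mathcal{S}^\star) \le \opt \cdot \Inf(f)/\eps^2$ of \Cref{thm:main} immediately gives $\Delta_c(\mathcal{S}^\star) \le \opt \cdot O(\sqrt{n})/\eps^2$, and the $O(\eps)$-error guarantee transfers verbatim; no additional argument about $\mathcal{S}^\star$ is needed.

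For the influence bound I would recall the one-line Fourier-analytic argument. For a monotone function the influence of a variable coincides with its degree-one Fourier coefficient, $\Inf_i(f) = \hat f(\{i\})$, so by Cauchy--Schwarz followed by Parseval's identity,
\[
\Inf(f) \;=\; \sum_{i=1}^n \hat f(\{i\}) \;\le\; \sqrt{n}\,\Big(\sum_{i=1}^n \hat f(\{i\})^2\Big)^{1/2} \;\le\; \sqrt{n}\,\Big(\sum_{S\subseteq[n]} \hat f(S)^2\Big)^{1/2} \;=\; \sqrt{n}.
\]

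I do not anticipate any genuine obstacle here: the corollary is a direct substitution into \Cref{thm:main}, and the ingredient it relies on is textbook. The only things to be careful about are bookkeeping --- that the $\zo^n$ domain and the probability normalization of $\Inf$ in \Cref{def:influence} agree with the conventions under which $\Inf(f) \le \sqrt{n}$ is usually stated --- and, if desired, remarking that $\mathrm{Majority}_n$ has total influence $\Theta(\sqrt{n})$, so the $O(\sqrt{n})$ factor is the best one obtainable from \Cref{thm:main} alone and any improvement would have to exploit the representation of $f$ (cf.~\Cref{claim:breakthrough}).
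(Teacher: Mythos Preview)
Your proposal is correct and matches the paper's own approach exactly: the paper simply states that the corollary follows by combining \Cref{thm:main} with the total-influence bound for monotone functions from~\cite{BT96}, which is precisely what you do. Your inclusion of the Cauchy--Schwarz/Parseval derivation of $\Inf(f)\le\sqrt{n}$ and the tightness remark about Majority go slightly beyond what the paper spells out, but they are accurate and apt.
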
 

Monotonicity is a natural property of functions, and monotone functions are intensively studied in both algorithms and complexity theory.  The number of distinct $n$-variate monotone functions is {\sl doubly} exponential in $n$, making it a much larger class than the other examples discussed above (which have sizes that are only singly exponential in $n$).

\subsection{\violet{Other models for querying priced information}} 
\label{sec:prior} 

In this subsection we give an overview of the other formulations of the general problem of querying priced information that have been studied in the literature.

Charikar, Fagin, Guruswami, Kleinberg, Raghavan, and Sahai~\cite{CFGKRS00} employed the framework of {\sl competitive analysis}: instead of comparing the expected cost of a strategy $\mathcal{S}$ to that of the optimal strategy (as we do),~\cite{CFGKRS00} compares, for each possible input $\underline{x}$, the cost incurred by $\mathcal{S}$ on $\underline{x}$ to the cost of the cheapest {\sl proof} that certifies $f$'s value on $\underline{x}$. Such a proof is a subset of coordinates $S \sse [n]$ for which restricting $f$ by $\underline{x}_S$ fixes its value to $f(\underline{x})$.  Their goal is to design strategies that minimize this ratio for a worst-case $\underline{x}$.  Within this formulation,~\cite{CFGKRS00} designed strategies for functions $f$ computable by ``{\sc And}-{\sc Or} trees": formulas with internal nodes labeled by binary $\{ {\textsc{And}}, {\textsc{Or}}\}$ gates, and leaves labeled by distinct variables.   There have been quite a number of papers that follow up on~\cite{CFGKRS00}, working within their framework of competitive analysis~\cite{GK01,CL05,CL05-ESA,CL06,CL08,CGLM11,CM11,CL11,CLS17}. 

The formulation in the work of Kapalan, Kushilevitz, and Mansour~\cite{KKM05} can be viewed as an average-case version of~\cite{CFGKRS00}'s: they compare the expected cost of a strategy to the expected cost, with respect to a random input~$\bx$, of the cheapest proof that certifies $f$'s value on~$\bx$.  While we only require the strategy to compute the value of $f(\underline{x})$,~\cite{KKM05} requires it to additionally output an ``explanation" of $f(\underline{x})$: in the case of CDNF formulas that they study, this would be a term in the DNF representation that $\underline{x}$ satisfies when $f(\underline{x}) = 1$, and a clause in the CNF representation that $\underline{x}$ falsifies when $f(\underline{x}) = -1$.  Another difference is that~\cite{KKM05} also considers the possibility of non-uniform distributions over inputs (either product distributions or an arbitrary distribution), whereas we focus  on the uniform distribution.   In addition to monotone CDNF formulas,~\cite{KKM05} also gives results for disjunctions and read-once DNF formulas.

The formulation that we work with is most similar to the one introduced by Deshpande, Hellerstein, and Kletenik~\cite{DHK14} and further studied in~\cite{AHKU17,GGHK18,Hel18,BDHK18}.  A key difference is that these works (and also~\cite{KKM05} discussed above) focus on {\sl zero-error} strategies.  We on the other hand allow strategies to err on a small fraction of inputs, and we benchmark them against the optimal strategy that is also allowed to error on a small fraction of inputs; this makes our formulation incomparable to theirs. Another difference is that these works additionally consider the setting of general product distributions over inputs, whereas we focus  on the uniform distribution.
The function classes studied in these papers include CDNF formulas, monotone $k$-term DNF formulas and width-$k$ DNF formulas, halfspaces, and symmetric functions.

\subsection{Relation to~\cite{BLT20}} 
\violet{The starting point of our work is the paper~\cite{BLT20}, which analyzes $\mathcal{S}^\star_{\textsc{UnitCost}}$ from the vantage point of learning theory.  Implicit in the analysis of~\cite{BLT20} is a bound on expected cost of $\mathcal{S}^\star_{\textsc{UnitCost}}$ relative to $\opt(f,c = 1^n, \eps=0)$---recalling~\Cref{def:opt}, this is the special case of unit costs, and the laxer setting where the performance of $\mathcal{S}^\star_{\textsc{UnitCost}}$ is compared to the optimal {\sl zero-error} strategy.}  

\violet{Our work builds on and extends the techniques in~\cite{BLT20}; three key  differences between the setup in~\cite{BLT20} and ours are:} 

\begin{enumerate}[leftmargin=0.6cm]
\item {\sl The incorporation of prices}: in~\cite{BLT20}, the goal is to construct an approximate decision tree representation for an unknown function $f$ with as small an average depth as possible.\footnote{\cite{BLT20} actually state their results in terms of decision tree {\sl size}, a weaker complexity measure than average depth.  However, their analysis extends to handle average depth as the complexity measure.} This corresponds to the unit-cost setting, since every query contributes equally to the average depth of a tree, regardless of {\sl which} coordinate is queried.  A main challenge we have to overcome in this work is the one introduced by the overall setting of querying {\sl priced} information: our goal is to  minimize the {\sl cost-weighted} average depth of a tree, where certain queries contribute more than others. 
\item {\sl Benchmarking against the optimal exact versus approximate decision tree/strategy}: in~\cite{BLT20}, the average depth of the {\sl approximate} decision tree representation built by $\mathcal{S}^\star_{\textsc{UnitCost}}$ is compared to that of the optimal {\sl exact} decision tree representation of $f$. The followup work~\cite{BLT3} shows that if $f$ is assumed to be {\sl monotone}, then the average depth of the approximate decision tree representation built by $\mathcal{S}^\star_{\textsc{UnitCost}}$ can in fact be bounded by the optimal approximate decision tree representation of $f$.  In this work, we bound the expected cost of our strategy $\mathcal{S}^\star$, which we prove achieves error $O(\eps)$, in terms of that the optimal $\eps$-error strategy for $f$, without relying on any assumptions about $f$.  
\item {\sl Local versus global algorithm}: 
\violet{in the learning-theoretic setting of~\cite{BLT20}, the  learning algorithm has a ``global" view of the entire decision tree representation of $f$ as it constructs the tree.  In contrast, in the context of querying priced information, the goal of the query strategy is to compute $f(\underline{x})$ for a {\sl specific} unknown input $\underline{x}$ (constructing the entire decision tree would render the strategy inefficient).  This constrains the strategy to proceed in a ``local" fashion---this corresponds to only constructing the single root-to-leaf path of the overall decision tree, the unique one that is consistent with $\underline{x}$---which introduces technical challenges. For example, in~\cite{BLT20} the termination condition of their learning algorithm can---and does---depend on the global structure of the decision tree, whereas the termination condition of our strategy cannot. }

\end{enumerate} 

We are hopeful that the techniques we have introduced to handle these differences will find further applications back in learning theory, the domain of~\cite{BLT20}; the extension described in Item~1 above, in particular, is related to the setting of learning with {\sl attribute costs}~\cite{KKM05}.  We view this as an interesting avenue for future work.  


\section{Preliminaries} 

We use {\bf boldface} (e.g.~$\bx \sim \zo^n$) to denote random variables.  Unless otherwise stated, all probabilities and expectations are with respect to the uniform distribution.  Given a function $f : \zo^n \to \bits$, we write $\bias(f)$ to denote $\min_{b\in \bits}\{ \Pr[f(\bx)=b]\}$, noting that $\bias(f) = \Pr[f(\bx)\ne \sign(\E[f])]$, the error incurred by approximating $f$ with a constant function.  Given a function $f : \zo^n \to \bits$ and a strategy $\mathcal{S}$, we write $\error_f(\mathcal{S}) \coloneqq \Pr[\mathcal{S}(\bx) \ne f(\bx)]$ to denote the error of $\mathcal{S}$ as a strategy for $f$. Given a strategy $\mathcal{S}$ and cost vector $c \in \N^n$, we write $\Delta_c(\mathcal{S})$ as shorthand for $\E[\cost_c(\mathcal{S},\bx)]$.

\purple{
It will be useful for us to define a {\sl worst-case} analogue of $\opt$.
\begin{definition}[$\overline{\opt}$]
\label{def:opt} 
Let $f : \zo^n \to \{ \pm 1\}$ be a function, $c \in \N^n$ be a cost vector, and $\eps \in (0,\frac1{2})$ be an error parameter.  We write $\overline{\opt}(f,c,\eps)$ to denote the worst-case cost, relative to $c$, of the optimal $\eps$-error strategy for $f$: 
\[ \opt(f,c,\eps) \coloneqq \min\{\, \overline{\Delta}_c(\mathcal{S}) \colon \text{$\mathcal{S}$ is an $\eps$-error strategy for $f$}\,\},\] 
where $\overline{\Delta}_c(\mathcal{S}) \coloneqq \max_{x \in \zo^n} \cost_c(\mathcal{S},x)$.  When $f,c,$ and $\eps$ are clear from context, we simply write $\overline{\opt}$. 
\end{definition} }

\subsection{Properties of strategies}
Recall that a strategy, often denoted $\mathcal{S}$, makes a series of adaptive queries to the coordinates of its input and then returns a value as a function of those query outputs. A \textit{deterministic} strategy always makes the same queries and outputs the same value on any fixed input while \textit{randomized} strategies can use randomness as well as previous query outputs to make those decisions.

We call a history of queries and their outputs a \textit{restriction}, often denoted $\alpha$. We say that some $x \in \zo^n$ is \textit{consistent} with $\alpha$ if coordinate queries to $x$ would give the results stored in $\alpha$. For any function $f: \zo^n \to \bits$ and restriction $\alpha$, we use $f_{\alpha}$ to refer to $f$ with it's domain restricted to just those inputs consistent with $\alpha$. For example,
\begin{align*}
    \Ex[f_{\alpha}] \coloneqq \Ex_{\bx \text{ consistent with $\alpha$}}[f(\bx)].
\end{align*}

On any input $x \in \zo^n$ and strategy $\mathcal{S}$, we use $\leaf(\mathcal{S}, x)$ to refer to the last restriction $\mathcal{S}$ makes when given $x$ as input. The value $\mathcal{S}$ outputs is a (potentially randomized) function of $\leaf(\mathcal{S}, x)$. Similarly, we use $\paths(\mathcal{S}, x)$ to denote the sequence of restrictions $\mathcal{S}$ creates given input $x$ \textit{before} it reaches $\leaf(\mathcal{S}, x)$. Given any restriction $\alpha$, we will use $\query(\mathcal{S}, \alpha)$ to denote the coordinate that $\mathcal{S}$ queries given a query history of $\alpha$. In randomized strategies, $\query(\mathcal{S}, \alpha)$ and $\paths(\mathcal{S}, x)$ can be random variables. In this notation, average cost is computed as
\begin{align*}
    \Delta_c(\mathcal{S}) = \Ex_{\bx \in \zo^n}\left[\sum_{\alpha \in \paths(\mathcal{S}, x)} c_{\query(\mathcal{S},\alpha)} \right].
\end{align*}
For any function $f: \zo^n \to \bits$, a strategy $\mathcal{S}$ is said to be \textit{$f$-consistent} if it has the minimum error with respect to $f$ of all strategies making the same queries. Equivalently, $\mathcal{S}$ is $f$-consistent if, for all $x \in \zo^n$
\begin{align*}
    \mathcal{S}(x) = \sign\left(\Ex[f_{\leaf(\mathcal{S}, x)}] \right).
\end{align*}
Any $f$-consistent strategy $\mathcal{S}$ has error
\begin{align*}
     \error_f(\mathcal{S}) = \Ex_{\bx \in \zo^n}\left[\bias(f_{\leaf(\mathcal{S}, x)}) \right].
\end{align*}
The \textit{average influence} of a strategy with respect to a function, denote $\AvgInf_f(\mathcal{S})$, is
\begin{align*}
    \AvgInf_f(\mathcal{S})\coloneqq \Ex_{\bx \in \zo^n}\left[\Inf(f_{\leaf(\mathcal{S}, x)}) \right].
\end{align*}
Because, for any function $f$, $\bias(f) \leq  \Inf(f)$, if $\mathcal{S}$ is $f$-consistent, $\AvgInf_f(\mathcal{S})$ upper bounds $\error_f(\mathcal{S})$. That fact, combined with the following properties showing how average influence decreases, make it a useful progress measure.

\begin{fact}[Influences split naturally]
    \label{fact:influence split}
    For any function $f: \zo^n \to \bits$ and $i \in [n]$,
    \begin{align*}
        \Inf(f) = \Inf_i(f) + \frac{1}{2}(\Inf(f_{x_i = 0}) + \Inf(f_{x_i = 1})).
    \end{align*}
\end{fact}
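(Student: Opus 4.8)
The plan is to prove the identity $\Inf(f) = \Inf_i(f) + \frac{1}{2}\bigl(\Inf(f_{x_i=0}) + \Inf(f_{x_i=1})\bigr)$ by decomposing the total influence variable-by-variable and using the standard fact that the edges of the Boolean hypercube $\zo^n$ split into those in the ``$i$-direction'' and those in every other direction $j \neq i$. Recall $\Inf(f) = \sum_{j=1}^n \Inf_j(f)$, and for each $j$, $\Inf_j(f) = \Prx_{\bx}[f(\bx) \neq f(\bx^{\oplus j})]$, which counts the fraction of hypercube edges in direction $j$ that are bichromatic under $f$. The term $\Inf_i(f)$ is exactly the contribution of edges in direction $i$. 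The remaining sum $\sum_{j \neq i} \Inf_j(f)$ should match $\frac{1}{2}\bigl(\Inf(f_{x_i=0}) + \Inf(f_{x_i=1})\bigr)$, and the factor of $\frac12$ is simply because each of the two subcubes $\{x_i = 0\}$ and $\{x_i = 1\}$ is half of $\zo^n$.

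First I would fix $j \neq i$ and compute $\Inf_j(f)$ by conditioning on the value of the $i$-th coordinate. An edge of $\zo^n$ in direction $j$ has a well-defined $i$-coordinate (flipping coordinate $j$ does not change coordinate $i$), so the set of direction-$j$ edges partitions into those lying entirely inside $\{x_i = 0\}$ and those lying entirely inside $\{x_i = 1\}$. Concretely, writing $\bx \sim \zo^n$ and splitting on $\bx_i$:
\begin{align*}
\Inf_j(f) &= \Prx_{\bx}[f(\bx) \neq f(\bx^{\oplus j})] \\
&= \tfrac{1}{2}\Prx_{\bx}[f(\bx) \neq f(\bx^{\oplus j}) \mid \bx_i = 0] + \tfrac{1}{2}\Prx_{\bx}[f(\bx) \neq f(\bx^{\oplus j}) \mid \bx_i = 1] \\
&= \tfrac{1}{2}\Inf_j(f_{x_i=0}) + \tfrac{1}{2}\Inf_j(f_{x_i=1}).
\end{align*}
The last equality is just the definition of the influence of variable $j$ on the restricted function, using that conditioning on $\bx_i = b$ makes $\bx$ uniform over the subcube on which $f_{x_i=b}$ lives, and that $\bx^{\oplus j}$ stays in that same subcube.

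Then I would sum this identity over all $j \neq i$. Since neither $f_{x_i=0}$ nor $f_{x_i=1}$ depends on coordinate $i$, we have $\Inf_i(f_{x_i=b}) = 0$, so $\sum_{j \neq i}\Inf_j(f_{x_i=b}) = \sum_{j=1}^n \Inf_j(f_{x_i=b}) = \Inf(f_{x_i=b})$ for each $b \in \{0,1\}$. Therefore
\begin{align*}
\Inf(f) = \Inf_i(f) + \sum_{j \neq i} \Inf_j(f) = \Inf_i(f) + \tfrac{1}{2}\Inf(f_{x_i=0}) + \tfrac{1}{2}\Inf(f_{x_i=1}),
\end{align*}
as claimed. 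There is no real obstacle here — the statement is essentially a bookkeeping identity — but the one point to be careful about is the conditioning step: one must verify that restricting to $\bx_i = b$ and then flipping coordinate $j \neq i$ genuinely corresponds to the influence of $x_j$ on $f_{x_i = b}$, i.e., that the distribution of $\bx$ conditioned on $\bx_i = b$ is exactly the uniform distribution on the domain of $f_{x_i=b}$, and that flipping $j$ commutes with the restriction. Both are immediate, so I would state them in a sentence rather than belabor them.
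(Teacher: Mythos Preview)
Your proof is correct. The paper itself states this identity as a \emph{fact} without proof, so there is no argument to compare against; your variable-by-variable conditioning argument is the standard way to verify it and would be entirely appropriate here.
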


A simple corollary of \Cref{fact:influence split} is that it is possible to determine the average influence of a strategy by looking at the queries it makes.
\begin{corollary}[Average influence of strategy]
\label{cor:average influence}
    For any function $f: \zo^n \to \bits$ and strategy $\mathcal{S}$,
    \begin{align*}
        \AvgInf_f(\mathcal{S}) = \Inf(f) - \Ex_{\bx \in \zo^n}\left[\sum_{\alpha \in \paths(\mathcal{S}, \bx)} \Inf_{\query(S, \alpha)}(f_\alpha) \right].
    \end{align*}
\end{corollary}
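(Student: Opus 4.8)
The plan is to establish the identity by ``fully expanding'' the decision tree underlying $\mathcal{S}$ one query at a time, using \Cref{fact:influence split} as the inductive step. I would first handle deterministic strategies and then reduce the randomized case to the deterministic one.

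Let $\mathcal{S}$ be deterministic. View $\mathcal{S}$ as a finite decision tree whose nodes are the restrictions reachable under $\mathcal{S}$: an internal node $\alpha$ is labelled by the coordinate $\query(\mathcal{S},\alpha)$ and has the two children $\alpha \cup \{x_{\query(\mathcal{S},\alpha)} = b\}$ for $b \in \zo$, and its leaves are exactly the restrictions of the form $\leaf(\mathcal{S},x)$. (The tree is finite since $\mathcal{S}$ halts on every input.) For a restriction $\alpha$ write $p_\alpha \coloneqq \Prx_{\bx}[\bx \text{ consistent with } \alpha] = 2^{-|\alpha|}$. I would maintain, as the tree is expanded one internal node at a time, the invariant
\begin{align*}
\Inf(f) \;=\; \sum_{\alpha \in F} p_\alpha\, \Inf(f_\alpha) \;+\; \sum_{\alpha \in E} p_\alpha\, \Inf_{\query(\mathcal{S},\alpha)}(f_\alpha),
\end{align*}
where $F$ is the set of ``frontier'' nodes not yet expanded and $E$ the set of internal nodes already expanded. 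Initially $F = \{\emptyset\}$ and $E = \emptyset$, and the invariant is the tautology $\Inf(f) = \Inf(f)$. Expanding an internal node $\alpha \in F$, with $i \coloneqq \query(\mathcal{S},\alpha)$, means moving $\alpha$ from $F$ to $E$ and adding its two children $\alpha \cup \{x_i = b\}$, $b \in \zo$, to $F$; this preserves the invariant because \Cref{fact:influence split} applied to $f_\alpha$ gives
\begin{align*}
p_\alpha\, \Inf(f_\alpha) = p_\alpha\, \Inf_i(f_\alpha) + p_\alpha \cdot \tfrac12\big(\Inf(f_{\alpha \cup \{x_i = 0\}}) + \Inf(f_{\alpha \cup \{x_i = 1\}})\big),
\end{align*}
together with $p_{\alpha \cup \{x_i = b\}} = \tfrac12 p_\alpha$. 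Once every internal node has been expanded, $F$ is the set of leaves and $E$ is the set of internal nodes, so
\begin{align*}
\Inf(f) = \sum_{\text{leaves } \ell} p_\ell\, \Inf(f_\ell) \;+\; \sum_{\text{internal } \alpha} p_\alpha\, \Inf_{\query(\mathcal{S},\alpha)}(f_\alpha).
\end{align*}
The first sum equals $\Ex_{\bx}[\Inf(f_{\leaf(\mathcal{S},\bx)})] = \AvgInf_f(\mathcal{S})$, and the second equals $\Ex_{\bx}\big[\sum_{\alpha \in \paths(\mathcal{S},\bx)}\Inf_{\query(\mathcal{S},\alpha)}(f_\alpha)\big]$, since an internal node $\alpha$ lies on the path of a uniform $\bx$ exactly with probability $p_\alpha$. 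Rearranging yields the claimed identity for deterministic $\mathcal{S}$.

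For a randomized strategy I would condition on its internal randomness: each fixing of the random bits yields a deterministic strategy for which the identity already holds, and averaging over the random bits — using that $\Inf(f)$ does not depend on them, and that both $\AvgInf_f(\cdot)$ and the path-sum term are themselves expectations — gives the general statement. (Alternatively, one can rerun the expansion argument verbatim, now letting $p_\alpha$ denote the probability that $\mathcal{S}$'s query-and-answer history equals $\alpha$.)

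I do not expect a genuine obstacle here; the only points requiring a little care are (i) that the decision tree is finite, which follows from $\mathcal{S}$ halting on every input, so that the one-node-at-a-time expansion terminates, and (ii) tracking the probabilities correctly in the randomized case, which the reduction to deterministic strategies handles cleanly.
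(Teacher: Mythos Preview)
Your proposal is correct and follows essentially the same approach as the paper: prove the deterministic case by repeatedly applying \Cref{fact:influence split} (the paper phrases this as induction on the maximum number of queries, you phrase it as a frontier expansion, but these are the same argument), and then handle randomized strategies by writing them as mixtures of deterministic strategies and applying linearity of expectation.
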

For deterministic strategies, \Cref{cor:average influence} follows from \Cref{fact:influence split} via induction on the maximum number of queries the strategy makes. For randomized strategies, it can be proved by observing that any randomized strategy is a mixture of deterministic strategies and then using linearity of expectation.

\section{Warmup} 

Before proving~\Cref{thm:main}, we begin with a warmup, the proof of which is significantly simpler and yet already illustrates a few of the key new ideas. Rather than bounding the expected cost of our strategy in terms of $\opt \coloneqq \opt(f,c,\eps)$, in this warmup we will bound it in terms of the laxer quantity $\purple{\overline{\opt'}} \coloneqq \purple{\overline{\opt}}(f,c,0)$.  That is, $\purple{\overline{\opt'}}$ is the \purple{worst-case} cost of the optimal {\sl zero-error} strategy $\mathcal{S}_\opt$ for $f$ (i.e.~$\mathcal{S}_\opt(x) = f(x)$ for all $x\in \zo^n$).   

\begin{theorem}[Warmup for~\Cref{thm:main}] 
\label{thm:warmup} 
Let $f : \zo^n \to \{\pm 1\}$ be a function, $c \in \N^n$ be a cost vector, and $\eps \in (0,\frac1{2})$.  There is an efficient $O(\eps)$-error randomized strategy $\mathcal{S}^\diamond$ for $f$ with expected cost $\Delta_c(\mathcal{S}^\diamond) \le \purple{\overline{\opt'}}\cdot \Inf(f)/\eps$. 
\end{theorem}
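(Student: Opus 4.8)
\emph{Proof plan.} The plan is to take $\mathcal{S}^\diamond$ to be the cost-weighted greedy rule of \Cref{fig:our-strategy} but equipped with a different, purely \emph{local} stopping condition: from the empty restriction, repeatedly query the coordinate $i$ maximizing $\Inf_i(f_\alpha)/c_i$ over the current history $\alpha$ and restrict $f$ to $f_\alpha$, halting as soon as $\bias(f_\alpha) < \eps$ and outputting $\sign(\Ex[f_\alpha])$. This strategy always halts (each step queries a new coordinate, so $f_\alpha$ eventually becomes constant), it is $f$-consistent, and it never needs to know $\overline{\opt'}$ or $\Inf(f)$. It is also efficient: $\Inf_i(f_\alpha)$ and $\bias(f_\alpha)$ are not computed exactly but estimated to additive accuracy $\Theta(\eps)$ by sampling random inputs consistent with $\alpha$ and evaluating $f$ — the only source of randomness, and the reason the clean bounds below (error $<\eps$, cost $<\overline{\opt'}\cdot\Inf(f)/(2\eps)$) become error $O(\eps)$ and cost $\le\overline{\opt'}\cdot\Inf(f)/\eps$. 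I will argue as if the estimates were exact. The error bound is then immediate: $\mathcal{S}^\diamond$ is $f$-consistent, so $\error_f(\mathcal{S}^\diamond) = \Ex_{\bx}[\bias(f_{\leaf(\mathcal{S}^\diamond,\bx)})]$, and by construction every leaf $\alpha$ has $\bias(f_\alpha) < \eps$. So the content of the theorem is the cost bound.

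The progress measure for the cost bound is the influence that $\mathcal{S}^\diamond$ \emph{collects} along a path. Write $\Psi \coloneqq \Ex_{\bx}\big[\sum_{\alpha \in \paths(\mathcal{S}^\diamond,\bx)} \Inf_{\query(\mathcal{S}^\diamond,\alpha)}(f_\alpha)\big]$. By \Cref{cor:average influence}, $\Psi = \Inf(f) - \AvgInf_f(\mathcal{S}^\diamond) \le \Inf(f)$. The key lemma is a matching lower bound on the per-step progress: for \emph{every} restriction $\alpha$,
\[
\max_{i\in[n]} \frac{\Inf_i(f_\alpha)}{c_i} \;\geq\; \frac{\mathrm{Var}(f_\alpha)}{\overline{\opt'}}.
\]
To prove this I would fix a zero-error strategy $\mathcal{T}$ for $f$ with $\overline{\Delta}_c(\mathcal{T}) = \overline{\opt'}$ and let $\mathcal{T}_\alpha$ be the strategy for $f_\alpha$ that runs $\mathcal{T}$ while answering any query to an $\alpha$-fixed coordinate for free using $\alpha$. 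Then $\mathcal{T}_\alpha$ computes $f_\alpha$ with zero error, and since on every input consistent with $\alpha$ it makes a subset of the queries $\mathcal{T}$ makes, $\Delta_c(\mathcal{T}_\alpha) \le \overline{\Delta}_c(\mathcal{T}) = \overline{\opt'}$. Viewing $\mathcal{T}_\alpha$ as a decision tree computing $f_\alpha$, the OSSS inequality~\cite{ODBook} gives $\mathrm{Var}(f_\alpha) \le \sum_i \delta_i(\mathcal{T}_\alpha)\,\Inf_i(f_\alpha)$, where $\delta_i(\mathcal{T}_\alpha) = \Prx_{\bx}[\mathcal{T}_\alpha \text{ queries coordinate } i]$. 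Since $\sum_i \delta_i(\mathcal{T}_\alpha)\, c_i = \Delta_c(\mathcal{T}_\alpha) \le \overline{\opt'}$, factoring $\max_i \Inf_i(f_\alpha)/c_i$ out of $\sum_i \delta_i(\mathcal{T}_\alpha) c_i \cdot (\Inf_i(f_\alpha)/c_i)$ yields the lemma. I expect this to be the one piece of real content; the rest is bookkeeping.

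Putting the halves together: along any path of $\mathcal{S}^\diamond$, every $\alpha \in \paths(\mathcal{S}^\diamond,\bx)$ is a non-leaf and hence has $\bias(f_\alpha) \ge \eps$, so $\mathrm{Var}(f_\alpha) \ge 2\bias(f_\alpha) \ge 2\eps$ (using $\mathrm{Var}(g) = 1 - \Ex[g]^2 \ge 2\bias(g)$ for $\bits$-valued $g$), and the greedily chosen $i = \query(\mathcal{S}^\diamond,\alpha)$ satisfies $\Inf_i(f_\alpha) = c_i\cdot\max_j \Inf_j(f_\alpha)/c_j \ge (2\eps/\overline{\opt'})\,c_i$ by the lemma. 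Summing over the path and taking expectations,
\[
\Psi \;\geq\; \frac{2\eps}{\overline{\opt'}}\,\Ex_{\bx}\Big[\sum_{\alpha \in \paths(\mathcal{S}^\diamond,\bx)} c_{\query(\mathcal{S}^\diamond,\alpha)}\Big] \;=\; \frac{2\eps}{\overline{\opt'}}\,\Delta_c(\mathcal{S}^\diamond).
\]
Combined with $\Psi \le \Inf(f)$ this gives $\Delta_c(\mathcal{S}^\diamond) \le \overline{\opt'}\cdot\Inf(f)/(2\eps)$, even stronger than claimed. Re-running the two displayed estimates with the sampled influences and biases — which only shifts the effective threshold "$\bias(f_\alpha) < \eps$" by $\Theta(\eps)$ and incurs a small failure probability — changes everything by constant factors, giving an efficient randomized $O(\eps)$-error strategy of expected cost at most $\overline{\opt'}\cdot\Inf(f)/\eps$.

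A remark on where the difficulty really lives, and why this does not already prove \Cref{thm:main}: the lemma is exactly the point where "$f_\alpha$ has a cheap exact strategy" becomes "$f_\alpha$ has a coordinate of large cost-weighted influence", and it crucially uses the \emph{worst-case} quantity $\overline{\opt'}$ — the step $\Delta_c(\mathcal{T}_\alpha) \le \overline{\opt'}$ bounds an average over the sub-population consistent with $\alpha$ by the worst-case cost of $\mathcal{T}$, and there is no analogue in terms of the average cost $\opt$. Benchmarking against $\opt$ is precisely what will force the budgeted, $B$-dependent stopping rule of \Cref{fig:our-strategy} in the proof of \Cref{thm:main}.
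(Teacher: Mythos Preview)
Your proposal is correct and follows essentially the same route as the paper: the same strategy $\mathcal{S}^\diamond$ with the local stopping rule $\bias(f_\alpha)<\eps$, the same progress measure (influence collected along the path, equal to $\Inf(f)-\AvgInf_f(\mathcal{S}^\diamond)$ via \Cref{cor:average influence}), and the same key step of applying OSSS to the restricted function together with $\Delta_c(\mathcal{T}_\alpha)\le\overline{\opt'}$ to lower-bound the per-query influence gain by $\Theta(\eps)/\overline{\opt'}$ times the query's cost. The only cosmetic difference is that you invoke the original $\Var$-form of OSSS rather than the Jain--Zhang $\bias$-form the paper uses, which buys you the harmless factor of $2$.
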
 

\subsection{Proof of~\Cref{thm:warmup}}

For a strategy $\mathcal{S}$ and coordinate $i\in [n]$, we write $\delta_i(\mathcal{S})$ to denote the probability that $\mathcal{S}$ queries $\bx_i$ where $\bx \sim \zo^n$ is uniform random.  We have that 
\begin{equation}  \sum_{i=1}^n \delta_i(\mathcal{S})\cdot c_i = \E[\cost_c(\mathcal{S},\bx)] = \Delta_c(\mathcal{S}). \label{eq:sum-of-deltas}
\end{equation} 









At the heart of our proofs of both~\Cref{thm:warmup,thm:main} is a powerful inequality from the analysis of boolean functions, due to O'Donnell, Saks, Schramm, and Servedio~\cite{OSSS05}.  We will need the following variant of the OSSS inequality, which is due to Jain and Zhang~\cite{JZ11}: 

\begin{theorem}[OSSS inequality;~\cite{JZ11} version]
\label{thm:OSSS}
  For all functions $f : \zo^n \to \bits$ and strategies $\mathcal{S}$, 
  \[ \bias(f) -\error_f(\mathcal{S}) \le \sum_{i=1}^n \delta_i(\mathcal{S})\cdot \Inf_i(f). \] 
\end{theorem}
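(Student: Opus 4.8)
The plan is to deduce \Cref{thm:OSSS} from a two-function form of the original OSSS inequality~\cite{OSSS05,JZ11}, together with a short calculation relating the covariance of two $\bits$-valued functions to their biases and disagreement probability. The ingredient I would invoke as a black box is: for all $f,g:\zo^n\to\bits$ and every decision tree $\mathcal{T}$ that computes $g$ exactly,
\begin{equation}
\label{eq:osss2fn}
\E[f g] - \E[f]\,\E[g] \;\le\; \sum_{i=1}^{n}\delta_i(\mathcal{T})\cdot\Inf_i(f),
\end{equation}
where $\delta_i(\mathcal{T})$ is the probability that $\mathcal{T}$ queries coordinate $i$ on a uniform random input; the familiar statement is the case $f=g$.

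The point that makes this usable is that the strategy $\mathcal{S}$ is itself a decision tree that computes the $\bits$-valued function $x\mapsto\mathcal{S}(x)$ exactly, and $\delta_i(\mathcal{S})$ in the sense of \Cref{thm:OSSS} is exactly this tree's query probability. (A randomized $\mathcal{S}$ is a mixture of deterministic strategies, and with $g=\mathcal{S}$ both sides of~\eqref{eq:osss2fn} are averages over this mixture --- for the left side using that $\E[f]$ does not depend on $\mathcal{S}$'s coins --- so it is enough to argue for deterministic $\mathcal{S}$.) Applying~\eqref{eq:osss2fn} with $g\coloneqq\mathcal{S}$ and $\mathcal{T}\coloneqq\mathcal{S}$ gives $\E[f\cdot\mathcal{S}]-\E[f]\,\E[\mathcal{S}]\le\sum_i\delta_i(\mathcal{S})\,\Inf_i(f)$. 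It then remains to lower-bound the left side by $2(\bias(f)-\error_f(\mathcal{S}))$. Since $f$ and $\mathcal{S}$ take values in $\bits$, we have $\E[f\cdot\mathcal{S}]=\Pr[f=\mathcal{S}]-\Pr[f\ne\mathcal{S}]=1-2\,\error_f(\mathcal{S})$, and directly from $\bias(f)=\min_b\Pr[f(\bx)=b]$ we have $|\E[f]|=1-2\,\bias(f)$; hence, using $|\E[\mathcal{S}]|\le1$,
\[
\E[f\cdot\mathcal{S}]-\E[f]\,\E[\mathcal{S}] \;\ge\; \bigl(1-2\,\error_f(\mathcal{S})\bigr)-|\E[f]| \;=\; 2\bigl(\bias(f)-\error_f(\mathcal{S})\bigr).
\]
Combining, $\bias(f)-\error_f(\mathcal{S})\le\tfrac12\sum_i\delta_i(\mathcal{S})\,\Inf_i(f)$, which is a factor of two stronger than claimed; and if $\bias(f)\le\error_f(\mathcal{S})$ the statement is vacuous, so the sign of the covariance is a non-issue.

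The only genuinely nontrivial step is~\eqref{eq:osss2fn}, which I would not reprove. Its difficulty is precisely that it is not a termwise estimate: telescoping the Doob martingales $\E[f\mid\alpha_k]$ and $\E[g\mid\alpha_k]$ along $\mathcal{T}$'s computation path expresses $\E[fg]-\E[f]\E[g]$ as a path-sum of products of local discrete derivatives, but bounding that sum by a union bound over the path yields only the Poincar\'e-type quantity $\sum_i\Inf_i(f)=\Inf(f)$, with no $\delta_i$ factors --- useless for shallow trees. Obtaining the per-coordinate weights $\delta_i(\mathcal{T})$ is the content of~\cite{OSSS05} (and of~\cite{JZ11} for this two-function refinement). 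By comparison, the reduction above is routine bookkeeping, so in a writeup I would state~\eqref{eq:osss2fn} with citation and carry out only the covariance calculation in detail.
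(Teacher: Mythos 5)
Your derivation is correct, but it takes a different route from the paper, which in fact offers no proof of \Cref{thm:OSSS} at all: the statement is imported directly from \cite{JZ11}, whose self-contained ``query elimination'' argument produces this error-tolerant form $\bias(f)-\error_f(\mathcal{S})\le\sum_{i}\delta_i(\mathcal{S})\cdot\Inf_i(f)$ in one shot, without passing through covariance. You instead reduce it to the two-function form of the original OSSS inequality \cite{OSSS05}, namely $\E[fg]-\E[f]\,\E[g]\le\sum_i\delta_i(\mathcal{T})\cdot\Inf_i(f)$ for any tree $\mathcal{T}$ computing $g$, and the bookkeeping checks out: taking $g$ to be the $\bits$-valued function that the strategy itself computes makes the hypothesis of that inequality trivially satisfied; the identities $\E[f\cdot\mathcal{S}]=1-2\,\error_f(\mathcal{S})$ and $|\E[f]|=1-2\,\bias(f)$ are correct for $\bits$-valued functions; and averaging over the coins handles randomized strategies because $\error_f(\cdot)$ and each $\delta_i(\cdot)$ are affine in the mixture of deterministic strategies. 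The factor-of-two improvement you obtain is real and harmless (it is consistent with the exact-tree case, where the covariance form gives $1-\E[f]^2=4\bias(f)(1-\bias(f))\ge 2\bias(f)$ on the left). As for what each approach buys: \cite{JZ11} is already phrased for strategies that err, which is why the paper quotes it verbatim, whereas your reduction shows the theorem is not a genuinely new inequality but a short consequence of the covariance OSSS bound --- at the cost of having to cite that bound in its precise asymmetric, one-sided form, with the tree computing $g$ but the influences on the right-hand side being those of $f$ and no absolute value on the covariance. That is indeed the version \cite{OSSS05} prove, but it is worth pinning down in a writeup, since the variant with $\Inf_i(g)$ in place of $\Inf_i(f)$ would not suffice here.
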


As a straightforward consequence of~\Cref{thm:OSSS}, we show that functions $f$ that admit low-error strategies with low expected cost have a variable with large {\sl cost-weighted influence}: 
  
\begin{lemma}
\label{lem: cost-OSSS}
For all functions $f : \zo^n \to \bits$, cost vectors $c \in \N^n$, and strategies $\mathcal{S}$, there is a coordinate $i\in [n]$ such that 
\[ \frac{\Inf_i(f)}{c_i} \ge \frac{\bias(f)-\error_f(\mathcal{S})}{\Delta_c(\mathcal{S})}. \] 
\end{lemma}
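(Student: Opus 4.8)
The plan is to derive Lemma~\ref{lem: cost-OSSS} directly from the OSSS inequality (Theorem~\ref{thm:OSSS}) by a simple averaging argument. Starting from
\[
\bias(f) - \error_f(\mathcal{S}) \le \sum_{i=1}^n \delta_i(\mathcal{S}) \cdot \Inf_i(f),
\]
I would rewrite each term on the right-hand side as $\delta_i(\mathcal{S})\cdot c_i \cdot \frac{\Inf_i(f)}{c_i}$, so that
\[
\bias(f) - \error_f(\mathcal{S}) \le \sum_{i=1}^n \big(\delta_i(\mathcal{S})\cdot c_i\big)\cdot \frac{\Inf_i(f)}{c_i}.
\]
Now the quantities $\delta_i(\mathcal{S})\cdot c_i$ are nonnegative and, by Equation~\eqref{eq:sum-of-deltas}, sum to $\Delta_c(\mathcal{S})$. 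Hence the right-hand side is a nonnegative combination of the ratios $\Inf_i(f)/c_i$ with total weight $\Delta_c(\mathcal{S})$, and therefore is at most $\Delta_c(\mathcal{S})\cdot \max_{i\in[n]} \frac{\Inf_i(f)}{c_i}$.

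Combining these gives
\[
\bias(f) - \error_f(\mathcal{S}) \le \Delta_c(\mathcal{S})\cdot \max_{i\in[n]} \frac{\Inf_i(f)}{c_i},
\]
and rearranging (taking $i$ to be the maximizing coordinate, and noting $\Delta_c(\mathcal{S}) > 0$ whenever the claimed bound is non-vacuous) yields the desired coordinate $i$ with $\Inf_i(f)/c_i \ge (\bias(f)-\error_f(\mathcal{S}))/\Delta_c(\mathcal{S})$. I should double-check the degenerate edge cases: if $\Delta_c(\mathcal{S}) = 0$ then $\mathcal{S}$ queries nothing and $\error_f(\mathcal{S}) \ge \bias(f)$, so the left-hand side is $\le 0$ and any coordinate works; one could also just assume all $c_i \ge 1$ so this is a non-issue in the intended application.

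I do not anticipate a genuine obstacle here — this is a one-line averaging consequence of Theorem~\ref{thm:OSSS} together with the bookkeeping identity \eqref{eq:sum-of-deltas}. The only thing to be careful about is the direction of the inequality and making sure the weighted-average step is phrased correctly (a weighted average with nonnegative weights summing to $W$ is bounded by $W$ times the maximum term), and handling the trivial case where the numerator $\bias(f) - \error_f(\mathcal{S})$ is non-positive, in which case the statement holds vacuously since influences and costs are nonnegative.
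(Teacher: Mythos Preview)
Your proposal is correct and follows essentially the same approach as the paper: start from the OSSS inequality, rewrite $\delta_i(\mathcal{S})\Inf_i(f)$ as $(\delta_i(\mathcal{S})c_i)\cdot \frac{\Inf_i(f)}{c_i}$, bound by the maximum ratio times $\sum_i \delta_i(\mathcal{S})c_i = \Delta_c(\mathcal{S})$ via Identity~\eqref{eq:sum-of-deltas}, and rearrange. Your additional discussion of the degenerate case $\Delta_c(\mathcal{S})=0$ is a nice touch that the paper omits.
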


\begin{proof}
We have that
\begin{align*}
\bias(f) - \error_f(\mathcal{S}) &\le \sum_{i=1}^n \delta_i(\mathcal{S})\cdot \Inf_i(f) \tag*{(\Cref{thm:OSSS})} \\
&= \sum_{i=1}^n \delta_i(\mathcal{S})\cdot c_i \cdot \frac{\Inf_i(f)}{c_i} \\
&\le \max_{i \in [n]} \bigg\{  \frac{\Inf_i(f)}{c_i} \bigg\} \cdot \sum_{i=1}^n \delta_i(\mathcal{S}) \cdot c_i \\ 
&= \max_{i \in [n]} \bigg\{  \frac{\Inf_i(f)}{c_i} \bigg\} \cdot \Delta_c(\mathcal{S}) \tag*{(Identity~(\ref{eq:sum-of-deltas}))}, 
\end{align*} 
and the lemma follows. 
\end{proof}

For the proof of~\Cref{thm:warmup}, we will use the following special case of~\Cref{lem: cost-OSSS}: 

\begin{corollary}
\label{cor:zero-error-OSSS}
For all functions $f : \zo^n \to \bits$ and cost vectors $c\in \N^n$, there is a coordinate $i\in [n]$ such that 
\[ \frac{\Inf_i(f)}{c_i} \ge \frac{\bias(f)}{\opt'}, \quad\text{where $\opt'\coloneqq \opt(f,c,0)$.} \] 
\end{corollary}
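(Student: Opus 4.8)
The plan is to derive Corollary~\ref{cor:zero-error-OSSS} as an immediate specialization of Lemma~\ref{lem: cost-OSSS}, instantiated with $\mathcal{S}$ taken to be an optimal zero-error strategy for $f$ with respect to the cost vector $c$.

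First I would fix a strategy $\mathcal{S}_\opt$ that witnesses $\opt' = \opt(f,c,0)$, i.e.\ a zero-error strategy for $f$ with $\Delta_c(\mathcal{S}_\opt) = \opt'$. Such a strategy exists because every zero-error strategy corresponds to a decision tree that computes $f$ exactly, and after pruning there are only finitely many such trees, so the minimum in Definition~\ref{def:opt} is attained. By definition of a zero-error strategy, $\mathcal{S}_\opt(x) = f(x)$ for every $x \in \zo^n$, hence $\error_f(\mathcal{S}_\opt) = \Pr[\mathcal{S}_\opt(\bx) \ne f(\bx)] = 0$.

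Next I would apply Lemma~\ref{lem: cost-OSSS} to $f$, $c$, and the strategy $\mathcal{S}_\opt$. The lemma produces a coordinate $i \in [n]$ with
\[ \frac{\Inf_i(f)}{c_i} \ \ge\ \frac{\bias(f) - \error_f(\mathcal{S}_\opt)}{\Delta_c(\mathcal{S}_\opt)} \ =\ \frac{\bias(f) - 0}{\opt'} \ =\ \frac{\bias(f)}{\opt'}, \]
which is exactly the claimed inequality.

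Since the argument is a direct substitution $\error_f(\mathcal{S}_\opt)=0$, $\Delta_c(\mathcal{S}_\opt)=\opt'$ into the already-proved Lemma~\ref{lem: cost-OSSS} (which in turn rests on the OSSS inequality, Theorem~\ref{thm:OSSS}), there is no real obstacle here; the only point worth a sentence is the existence of a cost-minimizing zero-error strategy, which follows from finiteness of pruned decision trees computing $f$. If one prefers to avoid even that remark, one can instead phrase the conclusion for an arbitrary zero-error strategy $\mathcal{S}$ and then take the infimum of $\Delta_c(\mathcal{S})$ over all such $\mathcal{S}$, noting that the right-hand side $\bias(f)/\Delta_c(\mathcal{S})$ is monotone in $\Delta_c(\mathcal{S})$.
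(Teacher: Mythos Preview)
Your proposal is correct and matches the paper's own proof essentially verbatim: the paper simply says ``this follows by applying Lemma~\ref{lem: cost-OSSS} with $\mathcal{S}$ being the zero-error strategy for $f$ that achieves $\Delta_c(\mathcal{S}) = \opt'$.'' Your extra sentence justifying that the minimum is attained is a harmless elaboration not present in the paper.
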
 

\begin{proof} 
This follows by applying~\Cref{lem: cost-OSSS} with $\mathcal{S}$ being the zero-error strategy for $f$ that achieves $\Delta_c(\mathcal{S}) = \opt'$. 
\end{proof} 

We will analyze a slight variant (and simplification) of our query strategy $\mathcal{S^\star}$ described in~\Cref{fig:our-strategy}: 

\begin{figure}[H]
  \captionsetup{width=.9\linewidth}
\begin{tcolorbox}[colback = white,arc=1mm, boxrule=0.25mm]
\vspace{3pt} 

Query strategy $\mathcal{S}^\diamond(f,c,\eps)$ for an unknown input $\underline{x}$:  \vspace{6pt} 
 
\ \ \ \ \ \  while $\bias(f) > \eps$:   \vspace{4pt} 

\ \ \ \ \ \ \ \ \ \ Let $i \in [n]$ be the coordinate with the largest cost-weighted influence:
\[ \frac{\Inf_i(f)}{c_i} \ge \frac{\Inf_j(f)}{c_j} \quad \text{for all $j\in [n]$}.\] 
\ \ \ \ \ \ \ \ \ \ Query the $i$-th coordinate of $\underline{x}$, and restrict $f$ accordingly: $f = f_{x_i = \underline{x}_i}$  \vspace{6pt}

\ \ \ \ \ \ Output $\sign(\E[f])$.
\end{tcolorbox}
\caption{A simplified version of our strategy $\mathcal{S}^\star$ described in~\Cref{fig:our-strategy}.}
\label{fig:our-strategy-2}
\end{figure}

Note that unlike $\mathcal{S}^\star$, this strategy $\mathcal{S}^\diamond$ does not keep a counter, and does not need to ``know" the values of $\opt$ and $\Inf(f)$.  

The fact that $\mathcal{S}^\diamond$ is an efficient randomized strategy follows from the fact that $\bias(f)$ and variable influences $\Inf_i(f)$ can be estimated to high accuracy (with high probability) given blackbox query access to $f$.  These are standard random sampling arguments and we omit the details (see e.g.~Section 8 of~\cite{BLT20} where the details are spelt out).  To prove~\Cref{thm:warmup}, it remains to show that $\mathcal{S}^\diamond$ is an $\eps$-error strategy, and to bound its expected cost.   

\begin{proposition}[$\mathcal{S}^\diamond$ is an $\eps$-error strategy]
$\Pr[\mathcal{S}^\diamond(\bx) \ne f(\bx)] \le \eps$. 
\end{proposition}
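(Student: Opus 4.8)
The plan is to combine two simple observations: that $\mathcal{S}^\diamond$ is $f$-consistent, and that it only ever stops at a restriction $\alpha$ with $\bias(f_\alpha) \le \eps$. First I would check that $\mathcal{S}^\diamond$ always terminates. In each iteration of the while loop the strategy selects a coordinate $i$ maximizing $\Inf_i(f)/c_i$ and restricts $f$ to $f_{x_i = \underline{x}_i}$. Whenever the loop body executes we have $\bias(f) > \eps > 0$, so $f$ is non-constant and $\Inf(f) > 0$; hence the chosen coordinate has $\Inf_i(f) > 0$, and in particular it is not a coordinate that has already been fixed, since a fixed coordinate has zero influence in the restricted function. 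So each iteration reduces the number of free coordinates by one, and the loop runs for at most $n$ iterations, after which $f$ has been restricted to a single point, where $\bias(f) = 0 \le \eps$ and the loop exits.

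Next, fix an input $x \in \zo^n$ and let $\alpha = \leaf(\mathcal{S}^\diamond, x)$ be the final restriction reached on $x$. By construction $\mathcal{S}^\diamond$ then outputs $\sign(\E[f_\alpha]) = \sign(\E[f_{\leaf(\mathcal{S}^\diamond, x)}])$, so $\mathcal{S}^\diamond$ is $f$-consistent, and by the error formula for $f$-consistent strategies recorded in the Preliminaries we get $\error_f(\mathcal{S}^\diamond) = \Ex_{\bx \in \zo^n}[\bias(f_{\leaf(\mathcal{S}^\diamond, \bx)})]$. Now $\leaf(\mathcal{S}^\diamond, x)$ is exactly the first restriction along the query path at which the loop guard $\bias(f) > \eps$ fails, so $\bias(f_{\leaf(\mathcal{S}^\diamond, x)}) \le \eps$ for every $x$. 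Taking the expectation over $\bx$ gives $\error_f(\mathcal{S}^\diamond) = \Ex_{\bx}[\bias(f_{\leaf(\mathcal{S}^\diamond, \bx)})] \le \eps$, which is the claim.

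The only mildly delicate point is that the implemented version of $\mathcal{S}^\diamond$ replaces the exact quantities $\bias(f_\alpha)$ and $\Inf_i(f_\alpha)$ by random-sampling estimates; since these can be made accurate to within any fixed additive $\eta$ except with probability decaying exponentially in the sample size, passing to the idealized strategy that computes everything exactly is without loss of generality up to an $o(\eps)$ additive loss, which is absorbed into the $O(\eps)$-error guarantee of~\Cref{thm:warmup}. I do not anticipate any real obstacle here: the entire content of the proposition is the observation that an $f$-consistent strategy whose only stopping rule is ``$\bias$ has dropped to at most $\eps$'' has error at most $\eps$, the error being an average of leaf biases each bounded by $\eps$.
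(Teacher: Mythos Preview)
Your proof is correct and follows exactly the same route as the paper: the error is the average over leaves of $\bias(f_{\leaf})$, and the termination condition guarantees each leaf has $\bias \le \eps$. The paper's version is a one-line proof that omits the termination and estimation remarks you include, but the substance is identical.
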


\begin{proof}
This follows directly from the termination condition of our algorithm ($\bias(f) \le \eps$), and the fact that $\Pr[f(\bx)\ne \sign(\E[f])] = \bias(f)$. 
\end{proof}








\begin{lemma}[Bounding the expected cost of $\mathcal{S}^\diamond$]
\label{lem:cost-weighted-depth} 
$\Delta_c(\mathcal{S}^\diamond) \le \purple{\overline{\opt}'} \cdot \Inf(f)/\eps$. 
\end{lemma}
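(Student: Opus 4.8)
The plan is to track the average influence $\AvgInf_f(\mathcal{S}^\diamond)$ as a progress measure, and argue that each unit of cost incurred by $\mathcal{S}^\diamond$ decreases it by a definite amount, so that the total cost is controlled. Concretely, let me write $\mathcal{S}^\diamond_t$ for the strategy that runs $\mathcal{S}^\diamond$ but halts after cost $t$ has been incurred (reading off $\sign(\E[f_\alpha])$ at whatever restriction $\alpha$ it is at); then $\Delta_c(\mathcal{S}^\diamond)$ is essentially $\int_0^\infty \Pr[\mathcal{S}^\diamond \text{ has not halted after cost } t]\,dt$. The key local fact is: at any restriction $\alpha$ reached by $\mathcal{S}^\diamond$ before termination, we have $\bias(f_\alpha) > \eps$, so by Corollary \ref{cor:zero-error-OSSS} applied to $f_\alpha$ (whose zero-error optimal cost is at most $\overline{\opt'}$, since the restriction of the optimal zero-error strategy for $f$ to inputs consistent with $\alpha$ is a zero-error strategy for $f_\alpha$ of worst-case cost $\le \overline{\opt'}$), the coordinate $i$ that $\mathcal{S}^\diamond$ queries satisfies
\[ \frac{\Inf_i(f_\alpha)}{c_i} \;\ge\; \frac{\bias(f_\alpha)}{\overline{\opt'}} \;>\; \frac{\eps}{\overline{\opt'}}. \]
Equivalently, $\Inf_i(f_\alpha) > c_i \cdot \eps/\overline{\opt'}$: the influence drop at this step is at least $\eps/\overline{\opt'}$ times the cost $c_i$ paid at this step.

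Next I would combine this with the accounting identity of Corollary \ref{cor:average influence}, which says
\[ \Inf(f) - \AvgInf_f(\mathcal{S}^\diamond) \;=\; \Ex_{\bx}\Big[\textstyle\sum_{\alpha \in \paths(\mathcal{S}^\diamond,\bx)} \Inf_{\query(\mathcal{S}^\diamond,\alpha)}(f_\alpha)\Big]. \]
Using the per-step bound above, the right-hand side is at least $(\eps/\overline{\opt'})\cdot \Ex_{\bx}[\sum_{\alpha} c_{\query(\mathcal{S}^\diamond,\alpha)}] = (\eps/\overline{\opt'})\cdot \Delta_c(\mathcal{S}^\diamond)$. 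Since $\AvgInf_f(\mathcal{S}^\diamond) \ge 0$ and $\Inf(f) - \AvgInf_f(\mathcal{S}^\diamond) \le \Inf(f)$, this rearranges to exactly $\Delta_c(\mathcal{S}^\diamond) \le \overline{\opt'}\cdot \Inf(f)/\eps$, which is the claim. (One should be slightly careful about the final step of $\mathcal{S}^\diamond$ where the counter might "overshoot" $B$ — but $\mathcal{S}^\diamond$ as drawn in Figure \ref{fig:our-strategy-2} has no counter, so in fact there is no overshoot issue here; the only subtlety is that the last query before termination still obeys the per-step influence bound because termination is checked \emph{before} querying, so $\bias(f_\alpha) > \eps$ holds at that $\alpha$ too.)

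The step I expect to be the main obstacle is justifying that $\overline{\opt}(f_\alpha, c, 0) \le \overline{\opt'}$ for every restriction $\alpha$ reachable by $\mathcal{S}^\diamond$ — i.e., that restricting the global optimal zero-error strategy yields a valid zero-error strategy for $f_\alpha$ whose worst-case cost does not increase. This is intuitively clear (conditioning on being consistent with $\alpha$ can only shorten root-to-leaf paths relative to the worst case over all inputs), and it is precisely why the warmup uses the \emph{worst-case} quantity $\overline{\opt'}$ rather than the average-case $\opt$: for the average-case version, restriction can increase the conditional expected cost, which is exactly the difficulty that Theorem \ref{thm:main} (as opposed to this warmup) must confront. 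So here I would just state this monotonicity-under-restriction property carefully and verify it directly from the definition of a strategy, then plug it into the argument above.
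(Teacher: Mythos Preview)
Your proposal is correct and follows essentially the same approach as the paper: apply Corollary~\ref{cor:zero-error-OSSS} at each restriction $\alpha$ reached before termination (using $\bias(f_\alpha)>\eps$ and $\overline{\opt}(f_\alpha,c,0)\le\overline{\opt'}$) to get $\Inf_{\query}(f_\alpha)\ge c_{\query}\cdot\eps/\overline{\opt'}$, sum this inside the identity of Corollary~\ref{cor:average influence}, and conclude from $\AvgInf_f(\mathcal{S}^\diamond)\ge 0$. The only difference is cosmetic---the paper phrases the sum over nodes $v$ of the decision tree with weights $2^{-|\pi_v|}$ rather than as an expectation over $\bx$, and your preliminary remarks about $\mathcal{S}^\diamond_t$ and the integral are not actually used in your argument.
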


\begin{proof}



For each node $v$ in the decision tree representation of $\mathcal{S}^\diamond$, we write $\pi_v$ to denote the root-to-$v$ path in $\mathcal{S}^\diamond$, and $x_{i(v)}$ to denote the variable that is queried at $v$. Since $\mathcal{S}^\diamond$ is constructed using the strategy described in~\Cref{fig:our-strategy-2}, we have that $x_{i(v)}$ is the variable that maximizes {\sl cost-weighted influence} in the restriction of $f$ by $\pi_v$ (written $f_{\pi_v}$). By~\Cref{cor:zero-error-OSSS}, it follows that for every $v \in \mathcal{S}^\diamond$, 
\begin{equation}   \frac{\Inf_{i(v)}(f_{\pi_v})}{c_{i(v)}} \ge \frac{\bias(f_{\pi_v})}{\purple{\opt(f_{\pi_v},c,0)}},\label{eq:large-cost-weighted-inf}
\end{equation} 
where we have used the fact that $\opt(f_{\pi_v},c,0)\le \purple{\overline{\opt}(f_{\pi_v},c,0) \leq \overline{\opt}(f,c,0) = \overline{\opt}'}$.

Due to~\Cref{fact:influence split} and~\Cref{cor:average influence}, we have that 
\begin{align*}
    \Inf(f) - \AvgInf_f(\mathcal{S}^\diamond) &= \sum_{v \in \mathcal{S}^\diamond} 2^{-|\pi_v|} \cdot \Inf_{i(v)}(f_{\pi_v}) \tag*{(\Cref{cor:average influence})} \\
    &\ge \sum_{v\in \mathcal{S}^\diamond} 2^{-|\pi_v|} \cdot \frac{\bias(f_\pi) \cdot c_{i(v)}}{\purple{\opt(f_{\pi_v}, c, 0)}} \tag*{(\Cref{eq:large-cost-weighted-inf})} \\
    &\ge \frac{\eps}{\purple{\overline{\opt'}}} \sum_{v\in \mathcal{S}^\diamond} 2^{-|\pi_v|}\cdot c_{i(v)} \tag*{(Termination condition of $\mathcal{S}^\diamond$)} \\
    &= \frac{\eps}{\purple{\overline{\opt'}}} \cdot \Delta_c(\mathcal{S}^\diamond) 
\end{align*}

Since $\AvgInf_f(\mathcal{S}^\diamond) \ge 0$, we conclude that 
\[  \Delta_c(\mathcal{S}^\diamond) \le \Inf(f) \cdot \frac{\purple{\overline{\opt'}}}{\eps}, \] 
and the proof is complete.
\end{proof} 




\section{Proof of~\Cref{thm:main}} 

\begin{reptheorem}{thm:main}
    Let $f : \zo^n \to \{\pm 1 \}$ be a function, $c \in \N^n$ be a cost vector, and $\eps \in (0,\frac1{2})$ be an error parameter.  There is an efficient $O(\eps)$-error randomized strategy $\mathcal{S}^\star$ for $f$ with expected cost $\Delta_c(\mathcal{S}^\star) \le \opt\cdot \Inf(f)/\purple{\eps^2}$.
\end{reptheorem}

\purple{ 
\Cref{thm:main} is a straightforward consequence of the following, which bounds the expected cost of $S^\star$ in terms of $\overline{\opt}$ instead of $\opt$: 

\begin{theorem} 
\label{thm:main-worst-case} 
    Let $f : \zo^n \to \{\pm 1 \}$ be a function, $c \in \N^n$ be a cost vector, and $\eps \in (0,\frac1{2})$ be an error parameter.  There is an efficient $O(\eps)$-error randomized strategy $\mathcal{S}^\star$ for $f$ with expected cost $\Delta_c(\mathcal{S}^\star) \le \overline{\opt}\cdot \Inf(f)/\eps$.
\end{theorem}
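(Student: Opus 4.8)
The plan is to prove the two halves of the statement more or less separately, with essentially all the work in the error bound. The cost bound is immediate from the construction in \Cref{fig:our-strategy}: the randomized boundary step is engineered so that, conditioned on reaching a node with ${\tt counter} = t < B$ and best coordinate $i$ satisfying $t + c_i \ge B$, the strategy incurs expected additional cost $c_i \cdot \tfrac{B-t}{c_i} = B - t$. Hence the expected cost along every root-to-leaf path is exactly $B = \overline{\opt}\cdot \Inf(f)/\eps$, so $\Delta_c(\mathcal{S}^\star) = B$. It remains to show $\error_f(\mathcal{S}^\star) = O(\eps)$.

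For the error bound I would follow the template of the warmup (\Cref{lem:cost-weighted-depth}), replacing its termination-based reasoning. Fix an optimal \emph{worst-case} $\eps$-error strategy $\mathcal{S}_{\overline{\opt}}$ for $f$; the key point is that for every restriction $\pi$ the induced strategy $\mathcal{S}_{\overline{\opt}}|_\pi$ for $f_\pi$ has worst-case — hence also expected — cost at most $\overline{\opt}$. (This monotonicity under restriction is precisely why the theorem must be phrased via $\overline{\opt}$ rather than $\opt$: restricting a strategy that is optimal only in expectation need not preserve its expected cost.) Write $\eps_\pi \coloneqq \error_{f_\pi}(\mathcal{S}_{\overline{\opt}}|_\pi)$, so $0 \le \eps_\pi \le \bias(f_\pi)$. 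At every node $v$ of $\mathcal{S}^\star$'s decision tree the greedy coordinate $i(v)$ maximizes cost-weighted influence in $f_{\pi_v}$, so \Cref{lem: cost-OSSS} applied to $f_{\pi_v}$ and $\mathcal{S}_{\overline{\opt}}|_{\pi_v}$ yields the per-node progress bound $\Inf_{i(v)}(f_{\pi_v}) \ge \tfrac{c_{i(v)}}{\overline{\opt}}\big(\bias(f_{\pi_v}) - \eps_{\pi_v}\big)$.

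The heart of the argument — and the step I expect to be the main obstacle — is converting this into the $O(\eps)$ error bound with the right dependence on $\eps$. Consider the idealized (not efficiently implementable) stopping rule $\mathcal{S}^\diamond_\eps$ that queries greedily and halts as soon as $\bias(f_{\pi_v}) - \eps_{\pi_v} \le \eps$. Summing the per-node bound along a random root-to-leaf path and using $\bias(f_{\pi_v}) - \eps_{\pi_v} > \eps$ at every internal node of $\mathcal{S}^\diamond_\eps$ shows, exactly as in the warmup, that $\Delta_c(\mathcal{S}^\diamond_\eps) \le \overline{\opt}\cdot\Inf(f)/\eps = B$. Since $\mathcal{S}^\star$ makes the same queries as $\mathcal{S}^\diamond_\eps$ but continues until the path-cost reaches $B$, on every input on which $\mathcal{S}^\diamond_\eps$ halts within cost $B$ the strategy $\mathcal{S}^\star$ reaches and passes that halting node, where $\bias(f_{\pi_v}) \le \eps + \eps_{\pi_v}$; using that $\bias$ is a supermartingale down the tree and that $v \mapsto \eps_{\pi_v}$ is also a supermartingale (restricting $\mathcal{S}_{\overline{\opt}}$ along a queried coordinate and averaging can only decrease expected bias), so that $\Ex_{\bx}[\eps_{\leaf(\mathcal{S}^\star,\bx)}] \le \eps$, these inputs contribute only $O(\eps)$ to $\error_f(\mathcal{S}^\star)$. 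The remaining inputs are those on which $\mathcal{S}^\diamond_\eps$ would overrun the budget $B$, and I need their total measure to be $O(\eps)$. The subtlety is that a naive Markov bound from $\Delta_c(\mathcal{S}^\diamond_\eps) \le B$ only gives measure $\le 1$; recovering the genuine $1/\eps$ (rather than $1/\eps^2$) dependence requires exploiting the \emph{pathwise} per-step progress $\Inf_{i(v)}(f_{\pi_v}) \ge \eps c_{i(v)}/\overline{\opt}$ together with the conservation identity $\Ex_{\bx}\big[\sum_{v\in\paths(\mathcal{S}^\star,\bx)}\Inf_{i(v)}(f_{\pi_v})\big] = \Inf(f) - \AvgInf_f(\mathcal{S}^\star) \le \Inf(f)$ (\Cref{cor:average influence}) and the supermartingale structure of $\eps_{\pi_v}$. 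This is where I expect ideas genuinely beyond the warmup to be needed.

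Finally, two routine points. The boundary randomized query can only decrease the error, since $\tfrac12\big(\bias(f_{\pi_v,\,x_{i(v)}=0}) + \bias(f_{\pi_v,\,x_{i(v)}=1})\big) \le \bias(f_{\pi_v})$ for every $v$, so it can be folded into the analysis above without affecting the bounds. And the discrepancy of order $c_{\max}$ between $\sum_v 2^{-|\pi_v|}c_{i(v)}$ and $B$, together with the degenerate case in which $f_{\pi_v}$ becomes constant before the budget is spent, are handled by standard truncation (e.g.\ discarding coordinates with $c_i > \overline{\opt}$, which $\mathcal{S}^\star$ has essentially no incentive to query). Efficiency of $\mathcal{S}^\star$ is as in the warmup — influences and biases are estimable by random sampling — modulo the computation of $B$, which is deferred to \Cref{sec:stop}.
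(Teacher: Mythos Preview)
Your cost bound is correct, and your per-node inequality via \Cref{lem: cost-OSSS} is exactly what the paper uses. But the gap you flag---``a naive Markov bound only gives measure $\le 1$''---is real, and the pathwise-progress idea you sketch does not close it. On a ``bad'' input (one where $\mathcal{S}^\diamond_\eps$ overruns $B$) you indeed get $\sum_{v}\Inf_{i(v)}(f_{\pi_v})\ge \eps B/\overline{\opt}=\Inf(f)$ along that path; but on a ``good'' input the sum can be arbitrarily small once $\bias-\eps_\pi$ drops below $\eps$, so combining with the conservation identity $\E\big[\sum_v \Inf_{i(v)}(f_{\pi_v})\big]\le \Inf(f)$ only yields $\Pr[\text{bad}]\cdot\Inf(f)\le\Inf(f)$---again useless. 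The obstruction is structural: per-path reasoning cannot exploit the fact that the \emph{average} error is large, because it throws away exactly the good inputs where the argument has slack.

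The paper circumvents this by swapping the quantifiers: rather than fixing $B$ and bounding the error, it proves the contrapositive---assume $\error_f(\mathcal{S}^\star_{B^\star})\ge 2\eps$ and bound $B^\star$---and then tracks $\AvgInf_f(\mathcal{S}^\star_B)$ as a function of the \emph{continuous} parameter $B\in[0,B^\star]$. The derivative $-\tfrac{d}{dB}\AvgInf_f(\mathcal{S}^\star_B)$ equals $\E_{\boldsymbol{\ell}}\big[\Inf_{i(\boldsymbol{\ell})}(f_{\boldsymbol{\ell}})/c_{i(\boldsymbol{\ell})}\big]$, the expected cost-weighted influence of the \emph{next} query at a random leaf $\boldsymbol{\ell}$ of $\mathcal{S}^\star_B$. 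By \Cref{lem: cost-OSSS} this is at least $\tfrac{1}{\overline{\opt}}\,\E_{\boldsymbol{\ell}}\big[\bias(f_{\boldsymbol{\ell}})-\eps_{\boldsymbol{\ell}}\big]$, and now the two expectations decouple cleanly: $\E_{\boldsymbol{\ell}}[\bias(f_{\boldsymbol{\ell}})]=\error_f(\mathcal{S}^\star_B)\ge 2\eps$ (since $B\le B^\star$ and error is monotone in $B$), while $\E_{\boldsymbol{\ell}}[\eps_{\boldsymbol{\ell}}]=\error_f(\overline{\mathcal{S}_{\opt}})\le\eps$ (your martingale observation). Hence the derivative is $\ge\eps/\overline{\opt}$ throughout $[0,B^\star]$, so $0\le\AvgInf_f(\mathcal{S}^\star_{B^\star})\le\Inf(f)-B^\star\eps/\overline{\opt}$ and $B^\star\le\overline{\opt}\cdot\Inf(f)/\eps$. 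The point is that working with the global leaf average lets you use the assumed $2\eps$ lower bound on error \emph{at every intermediate budget}, which is exactly what the per-path decomposition cannot do.
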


\begin{proof}[Proof of~\Cref{thm:main} assuming~\Cref{thm:main-worst-case}]
     Let $\mathcal{S}_{\opt}$ be the $\eps$-error strategy with expected cost $\opt$. Consider the strategy $\overline{\mathcal{S}_{\opt}}$ which executes $\mathcal{S}_{\opt}$, except that if it would make a query leading to a cost greater than $\frac{\opt}{\eps}$, it instead returns $1$ (arbitrarily). By Markov's inequality, $\overline{\mathcal{S}_{\opt}}$ differs from $\mathcal{S}_{\opt}$ with probability at most $\eps$, and therefore $\overline{\mathcal{S}_{\opt}}$ is a $2\eps$-error strategy for $f$. Furthermore, $\overline{\mathcal{S}_{\opt}}$ has worst-case cost at most $\frac{\opt}{\eps}$. \Cref{thm:main} holds by applying \Cref{thm:main-worst-case} with $\overline{\opt} \leq \frac{\opt}{\eps}$.
\end{proof} 


}

To prove~\purple{\Cref{thm:main-worst-case}}, we consider the query strategy shown in \Cref{fig:our-strategy-budget}, which is defined for any budget $B > 0$.  First we show that if $B$ is set to $\purple{\overline{\opt}} \cdot \Inf(f) / \eps$, then $\mathcal{S}_B^\star$ satisfies \Cref{thm:main}. (Note that for \purple{appropriately chosen} $B$, $\mathcal{S}_B^\star$ is exactly the strategy $\mathcal{S}^\star$ described in~\Cref{fig:our-strategy} in the introduction.) Then, in \Cref{sec:stop} we give an algorithm for computing this value of the budget $B$.

\begin{figure}[h]
    \captionsetup{width=.9\linewidth}
    \begin{tcolorbox}[colback = white,arc=1mm, boxrule=0.25mm]
        \vspace{3pt} 
        
        Query strategy $\mathcal{S}_B^\star(f,c,\eps)$ for an unknown input $\underline{x}$:  \vspace{6pt} 
        
        \ \ Initialize {\tt counter} $\coloneqq 0$\vspace{4pt}  
        
        \ \ while {\tt counter} $ < B$:  \vspace{4pt} 
        
        
        \ \ \ \ \ \ Let $i \in [n]$ be the coordinate with the largest cost-weighted influence:
        \[ \frac{\Inf_i(f)}{c_i} \ge \frac{\Inf_j(f)}{c_j} \quad \text{for all $j\in [n]$}.\] 
        \ \ \ \ \ \ if ${\tt counter} + c_i \geq B$:\vspace{4pt} 
        
        \ \ \ \ \ \ \ \ \ \ \ \ With probability $\frac{B - {\tt counter}}{c_i}$:\vspace{4pt}
        
        \ \ \ \ \ \ \ \ \ \ \ \ \ \ \ \ \ \ Query the $i$-th coordinate of $\underline{x}$ and output $\sign(\E[f_{x_i = \underline{x}_i}])$. \vspace{4pt}

        \ \ \ \ \ \ \ \ \ \ \ \ With probability $1 - \frac{B - {\tt counter}}{c_i}$:\vspace{4pt}
        
        \ \ \ \ \ \ \ \ \ \ \ \ \ \ \ \ \ \ Output $\sign(\E[f])$. \vspace{4pt}
        
        \ \ \ \ \ \ Query the $i$-th coordinate of $\underline{x}$, and restrict $f$ accordingly: $f = f_{x_i = \underline{x}_i}$  \vspace{4pt}
        
        \ \ \ \ \ \ {\tt counter+=} $c_i$  
        
        \end{tcolorbox}
    \caption{The query strategy $\mathcal{S}_B^\star$ is defined for any budget $B \geq 0$. It takes as input, a representation of $f : \zo^n\to \{\pm 1\}$, a cost vector $c\in \N^n$, and an error parameter $\eps$.  It has query access to an unknown input~$\underline{x}$, and its goal is to compute $f(\underline{x})$ in a computationally- and cost-efficient manner.}
    \label{fig:our-strategy-budget}
\end{figure}

\begin{lemma}[Expected cost of $\mathcal{S}_B^\star$]
    For any $B > 0$, the strategy $\mathcal{S}_B^\star(f,c,\eps)$ has expected cost exactly equal to
    \begin{align*}
        \Delta_c(\mathcal{S}_B^\star) = \min(B, \sum_{i=1}^n c_i).
    \end{align*}
\end{lemma}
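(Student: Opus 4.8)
The plan is to account for the cost paid during each iteration of the {\tt while} loop and relate it to the increase of {\tt counter}. The first ingredient is a structural observation: $\mathcal{S}_B^\star$ queries each coordinate at most once. Indeed, as long as the current restriction of $f$ is non-constant it has a coordinate $j$ with $\Inf_j(f) > 0$, and such a $j$ cannot have been queried already (restricting a coordinate sets its influence to $0$), so $\Inf_j(f)/c_j > 0 \ge \Inf_i(f)/c_i$ for every already-queried $i$ and the strategy picks a fresh coordinate (breaking ties among zero-influence coordinates in favor of un-queried ones). Consequently the total cost is always at most $\sum_{i=1}^n c_i$; moreover, while any fresh coordinate remains the strategy queries a fresh one, and once all $n$ coordinates have been queried $f$ is constant on the surviving inputs, so $\sign(\E[f])$ is already the correct answer and no further cost is incurred (any subsequent ``query'' is a cost-free re-visit).

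The second ingredient is per-iteration accounting. Consider an iteration entered with {\tt counter} $= t < B$ at which coordinate $i$ is queried. If $t + c_i < B$, the strategy pays exactly $c_i$ and sets {\tt counter} $= t + c_i$, so the cost paid equals the increase in {\tt counter}. If instead $t + c_i \ge B$, the strategy halts at the end of this iteration, paying $c_i$ with probability $(B-t)/c_i$ and $0$ otherwise, for an expected payment of $((B-t)/c_i)\cdot c_i = B-t$. Note also that the loop is never exited ``normally'': after an iteration of the first type {\tt counter} is still $< B$, so the loop halts only via an iteration of the second type (or by exhausting all coordinates).

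Now split into two cases. If $B \le \sum_{i=1}^n c_i$: querying fresh coordinates would bring {\tt counter} up to $\sum_i c_i \ge B$, so some iteration of the second type is reached; let $t^\star$ be the value of {\tt counter} at its start, and note all earlier iterations are of the first type. The cost paid in those earlier iterations equals the total increase of {\tt counter}, namely $t^\star$, and the expected cost of the last iteration is $B - t^\star$, so the expected total cost is $t^\star + (B - t^\star) = B$; since this identity is conditional on the entire execution up to the last iteration, it holds unconditionally by the law of total expectation. If $B > \sum_{i=1}^n c_i$: every iteration is of the first type (the quantity ${\tt counter} + c_i$ is always at most $\sum_j c_j < B$), so the strategy queries all $n$ coordinates and pays exactly $\sum_{i=1}^n c_i$, incurring nothing further. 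Combining the two cases gives $\Delta_c(\mathcal{S}_B^\star) = \min(B, \sum_{i=1}^n c_i)$.

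I expect the only real subtlety to be the ``tail'' of the execution once $f$ has become constant while {\tt counter} $< B$: there the structural observation above is what is doing the work, since it guarantees that nothing more is paid (further ``queries'' are cost-free re-visits of already-revealed coordinates, and $f$ does not change), and it is what fixes the tie-breaking rule among zero-influence coordinates. With that bookkeeping in place, the heart of the proof is just the one-line telescoping identity $t^\star + (B - t^\star) = B$ of the first case, together with the trivial bound of the second case.
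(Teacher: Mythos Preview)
Your argument is correct and follows the same per-input accounting as the paper: you show that for each fixed input the expected cost is exactly $\min(B,\sum_i c_i)$, then average. The paper's own proof is a single sentence asserting precisely this per-input claim, so your proposal is essentially a fleshed-out version of theirs; the telescoping identity $t^\star + (B-t^\star) = B$ and the observation that re-visits of already-revealed coordinates are cost-free are exactly the details the paper elides.
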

\begin{proof}
    On any fixed input $x$, the expected cost of $\mathcal{S}_B^\star$ is $\min(B, \sum_{i=1}^n c_i)$, where $\sum_{i=1}^n c_i$ is the cost of a strategy that makes \textit{all} possible queries. Therefore, the expected cost over all $x$ is also $\min(B, \sum_{i=1}^n c_i)$.
\end{proof}

\begin{lemma}[Accuracy of $\mathcal{S}_B^\star$]
    \label{lem:accuracy main}
    Suppose that $\mathcal{S}_B^\star$ computes influences and expectations exactly. Then, for any $B^\star > \purple{\overline{\opt}} \cdot \Inf(f) / \eps$, the strategy $\mathcal{S}_{B^\star}^\star$ is a less than $2\eps$-error randomized strategy for $f$.
\end{lemma}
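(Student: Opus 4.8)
The plan is to bound the error of $\mathcal{S}_{B^\star}^\star$ by its average influence and then force that average influence below $2\eps$ by exploiting the budget, using the cost‑OSSS inequality (\Cref{lem: cost-OSSS}) as the driving tool and a worst‑case‑optimal $\eps$‑error strategy $\mathcal{S}_\opt$ for $f$ as the comparator.

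\textbf{Step 1 (reduce to average influence).} First observe that $\mathcal{S}_{B^\star}^\star$ is $f$‑consistent: in every branch in which it halts it outputs $\sign(\Ex[f_{\leaf}])$. Hence $\error_f(\mathcal{S}_{B^\star}^\star)=\Ex_{\bx}[\bias(f_{\leaf(\mathcal{S}_{B^\star}^\star,\bx)})]\le \Ex_{\bx}[\Inf(f_{\leaf(\mathcal{S}_{B^\star}^\star,\bx)})]=\AvgInf_f(\mathcal{S}_{B^\star}^\star)$, using $\bias(g)\le\Inf(g)$. If $B^\star\ge\sum_i c_i$ then $\mathcal{S}_{B^\star}^\star$ queries all of $\underline{x}$ and has zero error, so assume from now on that $B^\star<\sum_i c_i$, in which case $\Delta_c(\mathcal{S}_{B^\star}^\star)=B^\star$.

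\textbf{Step 2 (a local estimate from cost‑OSSS).} Fix a worst‑case‑optimal $\eps$‑error strategy $\mathcal{S}_\opt$ for $f$, which we may take to be $f$‑consistent. For a restriction $\alpha$ let $\mathcal{S}_\opt^{\alpha}$ be the strategy for $f_\alpha$ that runs $\mathcal{S}_\opt$ with the coordinates of $\alpha$ already revealed, and set $g_\alpha\coloneqq \bias(f_\alpha)-\error_{f_\alpha}(\mathcal{S}_\opt^{\alpha})\ge 0$, the amount by which $\mathcal{S}_\opt$ beats the constant function on the $\alpha$‑subcube. Since $\mathcal{S}_\opt^{\alpha}$ makes a subset of $\mathcal{S}_\opt$'s queries on every input, $\Delta_c(\mathcal{S}_\opt^{\alpha})\le \max_x \cost_c(\mathcal{S}_\opt,x)=\overline{\opt}$, so \Cref{lem: cost-OSSS} applied to $f_\alpha$ and $\mathcal{S}_\opt^{\alpha}$, together with the fact that $\mathcal{S}_{B^\star}^\star$ queries the coordinate of maximum cost‑weighted influence, gives $\Inf_{\query(\mathcal{S}_{B^\star}^\star,\alpha)}(f_\alpha)/c_{\query(\mathcal{S}_{B^\star}^\star,\alpha)}\ge g_\alpha/\overline{\opt}$ at every node $\alpha$. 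I will also use two elementary monotonicity facts about restrictions: $\bias$ is a supermartingale (averaging $\bias$ over the two children of a node never increases it), while $\error_{f_\alpha}(\mathcal{S}_\opt^{\alpha})=\Prx_{\by\text{ consistent with }\alpha}[\mathcal{S}_\opt(\by)\ne f(\by)]$ is an exact martingale; in particular $\Ex[\error_{f_\alpha}(\mathcal{S}_\opt^{\alpha})]=\error_f(\mathcal{S}_\opt)\le\eps$ when $\alpha$ ranges over the leaves of any strategy.

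\textbf{Step 3 (telescope, and isolate the obstacle).} Plugging the Step‑2 estimate into \Cref{cor:average influence} and discarding the nonnegative contribution of nodes with $g_\alpha<\eps$ yields $\AvgInf_f(\mathcal{S}_{B^\star}^\star)=\Inf(f)-\Ex_{\bx}[\sum_{\alpha\in\paths(\mathcal{S}_{B^\star}^\star,\bx)}\Inf_{\query(\mathcal{S}_{B^\star}^\star,\alpha)}(f_\alpha)]\le \Inf(f)-(\eps/\overline{\opt})\cdot C_{\mathrm{large}}$, where $C_{\mathrm{large}}$ is the expected cost $\mathcal{S}_{B^\star}^\star$ spends at nodes with $g_\alpha\ge\eps$. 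Since the total expected cost is $B^\star$, writing $C_{\mathrm{small}}\coloneqq B^\star-C_{\mathrm{large}}$ and using $B^\star>\overline{\opt}\cdot\Inf(f)/\eps$ gives $\AvgInf_f(\mathcal{S}_{B^\star}^\star)<\eps\, C_{\mathrm{small}}/\overline{\opt}$. In parallel, on any input $\bx$ whose path through $\mathcal{S}_{B^\star}^\star$ \emph{does} reach a node with $g_\alpha<\eps$, letting $\beta$ be the first such node and using that $\bias$ is a supermartingale, the conditional expected bias at the leaf is at most $\bias(f_\beta)=\error_{f_\beta}(\mathcal{S}_\opt^{\beta})+g_\beta<\error_{f_\beta}(\mathcal{S}_\opt^{\beta})+\eps$; averaging via optional stopping for the martingale $\error_{\cdot}(\mathcal{S}_\opt^{\cdot})$ shows these inputs contribute at most $\error_f(\mathcal{S}_\opt)+\eps\le 2\eps$ to $\error_f(\mathcal{S}_{B^\star}^\star)$.

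The remaining — and main — difficulty is to combine these two estimates into $\error_f(\mathcal{S}_{B^\star}^\star)<2\eps$: equivalently, to control the leaves that the budget reaches while $g_\alpha$ is still $\ge\eps$, i.e.\ to show that $\mathcal{S}_{B^\star}^\star$ wastes at most $O(\overline{\opt})$ cost in expectation at nodes where $\mathcal{S}_\opt$ has already come within $\eps$ of the constant on the corresponding subcube. A plain Markov bound on ``$C_{\mathrm{small}}\le B^\star$'' or on the probability of exhausting the budget prematurely is far too lossy here; the argument must instead use that along a path that stays in the large‑gap region the greedy consumes cost‑weighted influence at rate at least $\eps/\overline{\opt}$ throughout, charge such paths against the global budget $\Ex_{\bx}[\sum_{\alpha}\Inf_{\query(\mathcal{S}_{B^\star}^\star,\alpha)}(f_\alpha)]\le\Inf(f)$ coming from $\AvgInf\ge 0$, and spend the factor‑$2$ slack in the target error — which is precisely what the calibration $B^\star>\overline{\opt}\cdot\Inf(f)/\eps$ is designed to make work. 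I expect essentially all of the real work of the proof to lie in this last step.
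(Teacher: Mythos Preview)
Your Steps~1--2 are fine and line up with the paper. The gap is in Step~3: the large-gap/small-gap \emph{pathwise} decomposition does not close, and the completion you sketch does not work as stated. Concretely, your bound $\AvgInf_f(\mathcal{S}_{B^\star}^\star)<\eps\,C_{\mathrm{small}}/\overline{\opt}$ is only useful if $C_{\mathrm{small}}=O(\overline{\opt})$, but nothing forces that: once a path enters the small-gap region the strategy may continue spending essentially all of its remaining budget there, so $C_{\mathrm{small}}$ can be $\Theta(B^\star)$. On the other side, your ``pathwise rate'' observation is correct node-by-node (on large-gap nodes $\Inf_{\query}(f_\alpha)\ge(\eps/\overline{\opt})\,c_{\query}$), but charging a path that stays in the large-gap region against the global budget $\Ex_{\bx}[\sum_\alpha\Inf_{\query}(f_\alpha)]\le\Inf(f)$ only yields $p\cdot B^\star\cdot(\eps/\overline{\opt})\le\Inf(f)$, i.e.\ $p<1$; there is no pathwise telescoping $\Inf(f_{\mathrm{leaf}})=\Inf(f)-\sum_\alpha\Inf_{\query}(f_\alpha)$ to squeeze more out of this, since influence can \emph{increase} along a single branch (\Cref{fact:influence split} is only an average over the two children). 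So neither of your two estimates, nor any straightforward combination of them, delivers $\error<2\eps$.

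The paper avoids the whole pathwise split by arguing the \emph{contrapositive} and slicing by \emph{budget level} rather than by path. Assume $\error_f(\mathcal{S}_{B^\star}^\star)\ge 2\eps$. Since $\mathcal{S}_B^\star$ is $f$-consistent and $\bias$ is a supermartingale under restriction, $B\mapsto\error_f(\mathcal{S}_B^\star)=\Ex[\bias(f_{\leaf(\mathcal{S}_B^\star,\bx)})]$ is nonincreasing, hence $\error_f(\mathcal{S}_B^\star)\ge 2\eps$ for every $B\le B^\star$. Then at each intermediate budget $B$, averaging your Step~2 bound over the \emph{leaves} of $\mathcal{S}_B^\star$ and using that $\Ex_{\bell}[\error_{f_{\bell}}((\mathcal{S}_\opt)_{\bell})]=\error_f(\mathcal{S}_\opt)\le\eps$ gives
\[
-\frac{d}{dB}\,\AvgInf_f(\mathcal{S}_B^\star)
\;=\;\Ex_{\bell=\leaf(\mathcal{S}_B^\star,\bx)}\!\left[\frac{\Inf_{\query(\mathcal{S}_B^\star,\bell)}(f_{\bell})}{c_{\query(\mathcal{S}_B^\star,\bell)}}\right]
\;\ge\;\frac{\Ex_{\bell}[\,\bias(f_{\bell})-\error_{f_{\bell}}((\mathcal{S}_\opt)_{\bell})\,]}{\overline{\opt}}
\;\ge\;\frac{2\eps-\eps}{\overline{\opt}}
\;=\;\frac{\eps}{\overline{\opt}}.
\]
Integrating from $0$ to $B^\star$ and using $\AvgInf\ge 0$ yields $B^\star\le\overline{\opt}\cdot\Inf(f)/\eps$, the desired contradiction. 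The missing idea in your write-up is precisely this: work in expectation at each fixed budget $B$ (where the martingale identities for $\bias$ and $\error$ apply cleanly), and use monotonicity of the error in $B$ to propagate the $2\eps$ assumption to all smaller budgets.
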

\begin{proof}
    We prove the contrapositive. Suppose that $\mathcal{S}_{B^\star}^\star$ has error at least $2\eps$. We will prove $B^\star \leq \purple{\overline{\opt}} \cdot \Inf(f) / \eps$.
    
    We first prove that for any $0 \leq B \leq B^\star$,
    \begin{align}
        \label{eq:Influence decreases}
        \AvgInf_f(\mathcal{S}_{B}^\star) \leq \Inf(f) - \frac{B \cdot \eps}{\purple{\overline{\opt}}}.
    \end{align}
     Recall \Cref{cor:average influence},
    \begin{align*}
        \AvgInf_f(\mathcal{S}_{B}^\star) = \Inf(f) - \Ex_{\bx \in \zo^n}\left[\sum_{\alpha \in \paths(\mathcal{S}_{B}^\star, \bx)} \Inf_{\query(\mathcal{S}_{B}^\star, \alpha)}(f_\alpha) \right]
    \end{align*}
    Since $\mathcal{S}_{0}^\star$ makes no queries, $\AvgInf_f(\mathcal{S}_{B}^\star) = \Inf(f)$. Therefore, in order to prove \Cref{eq:Influence decreases}, we just show the following bound on the derivative of $\AvgInf$ for any $B \leq B^\star$.
    \begin{align*}
        -\frac{d}{dB}\left(\AvgInf_f(\mathcal{S}_{B}^\star)\right) \geq \frac{\eps}{\purple{\overline{\opt}}}
    \end{align*}
    For any $B \leq B^\star$  there is some sufficiently small $\eps$ for which the strategies $\mathcal{S}^\star_{B}$ and $\mathcal{S}_{B + \eps}$ are identical, except for each $x \in \zo^n$,
    \begin{align*}
        \ell_B(x) \coloneqq \leaf(\mathcal{S}_B^\star, x) \quad \text{and} \quad
        \nexti_B(x) \coloneqq \query(\mathcal{S}_{B + \eps}^\star,\ell_B(x)),
    \end{align*}
     $\mathcal{S}_{B + \eps}$ has an additional $\frac{\eps}{c_{\nexti_B(x)}}$ probability of querying $x_{\nexti_B(x)}$ as its last query. Combining the relation between $\mathcal{S}^\star_{B}$ and $\mathcal{S}_{B + \eps}$ with \Cref{cor:average influence},
    \begin{align*}
        -\frac{d}{dB}\left(\AvgInf_f(\mathcal{S}_{B}^\star)\right) &=
         \frac{d}{dB}\left( \Ex_{\bx \in \zo^n}\left[\sum_{\alpha \in \paths(\mathcal{S}_{B}^\star, \bx)} \Inf_{\query(\mathcal{S}_{B}^\star, \alpha)}(f_\alpha) \right]\right)\\
         &= \Ex_{\bx \in \zo^n}\left[\lim_{\eps \to 0}\frac{(\eps/ c_{\nexti_B(\bx)}) \cdot \Inf_{\nexti_B(\bx)}(f_{\ell_B(\bx)}) }{\eps} \right] \\
         &=  \Ex_{\bx \in \zo^n}\left[\frac{\Inf_{\nexti_B(\bx)}(f_{\ell_B(\bx)})}{c_{\nexti_B(\bx)}}\right] 
    \end{align*}
    
    Let $\purple{\overline{\mathcal{S}_\opt}}$ be the $\eps$-error strategy for $f$ with \purple{worst-case} cost $\purple{\overline{\opt}}$.
    \begin{align*}
        \Ex_{\bx \in \zo^n}\left[\frac{\Inf_{\nexti_B(\bx)}(f_{\leaf(\mathcal{S}_B, \bx)})}{c_{\nexti_B(\bx)}}\right] &\geq \Ex_{\bell = \leaf_B(\bx)}\left[\frac{\bias(f_{\bell}) - \error_{f_{\bell}}((\purple{\overline{\mathcal{S}_\opt}})_{\bell})} 
        {\Delta_c((\purple{\overline{\mathcal{S}_\opt}})_{\bell})}\right] \tag{\Cref{lem: cost-OSSS}} \\
        &\geq \Ex_{\bell = \leaf_B(\bx)}\left[\frac{\bias(f_{\bell}) - \error_{f_{\bell}}((\purple{\overline{\mathcal{S}_\opt}})_{\bell})} 
        {\purple{\overline{\opt}}}\right] \tag{$\Delta_c((\purple{\overline{\mathcal{S}_\opt}})_{\bell}) \leq \purple{\overline{\Delta}_c((\purple{\overline{\mathcal{S}_\opt}})_{\bell}) \leq\overline{\Delta}_c(\purple{\overline{\mathcal{S}_\opt}}) = \purple{\overline{\opt}}}$} \\
        &= \frac{1}{\purple{\overline{\opt}}} \left(\Ex_{\bell = \leaf_B(\bx)}\left[ \bias(f_{\bell}) \right] - \Ex_{\bell = \leaf_B(\bx)}\left[\error_{f_{\bell}}((\purple{\overline{\mathcal{S}_\opt}})_{\bell}) \right] \right)
    \end{align*}
    Since $\mathcal{S}^\star_B$ is $f$-consistent, the quantity $\Ex_{\bell = \leaf_B(\bx)}\left[ \bias(f_{\bell}) \right]$ is just $\error_f(\mathcal{S}_B^\star)$. Since $B < B^\star$, this is at least the error of $\mathcal{S}^\star_{B^\star}$ which we assumed is at least $2\eps$. Similarly, $\Ex_{\bell = \leaf_B(\bx)}\left[\error_{f_{\bell}}((\purple{\overline{\mathcal{S}_\opt}})_{\bell}) \right]$ is $\error_f(\purple{\overline{\mathcal{S}_\opt}})$, which is at most $\eps$. Therefore,
    \begin{align*}
        \Ex_{\bx \in \zo^n}\left[\frac{\Inf_{\nexti_B(\bx)}(f_{\leaf(\mathcal{S}_B, \bx)})}{c_{\nexti_B(\bx)}}\right] \geq \frac{1}{\purple{\overline{\opt}}} \left(2\eps - \eps \right) = \frac{\eps}{\purple{\overline{\opt}}}.
    \end{align*}
    We conclude that \Cref{eq:Influence decreases} holds for $0 \leq B \leq B^\star$. Plugging in $B = B^\star$ and using the fact that for any incomplete strategy, including $\mathcal{S}_{B}^\star$, average influence is nonnegative, we have that
    \begin{align*}
        0 &\leq \Inf(f) - \frac{B^\star \cdot \eps}{\purple{\overline{\opt}}} \\
        B^\star &\leq \purple{\overline{\opt}} \cdot \Inf(f) / \eps. \qedhere 
    \end{align*}
\end{proof}

\begin{remark}[Handling errors in computing expectations]
    Our analysis in \Cref{lem:accuracy main} assumes that influences and expectations can be computed exactly. When running $\mathcal{S}_B^\star$ in practice, those values would be estimated through sampling. It is not hard to use standard Chernoff bounds and union bound to ensure that the probability any influence is off by more than a multiplicative factor of $2$ or expectation off by an additive $\eps$ is tiny. This would give an $O(\eps)$ rather than $2\eps$ error bound, which is consistent with \Cref{thm:main}.
\end{remark}

As long as influences and expectations are estimated using polynomially many random samples, $\mathcal{S}_B^\star$ is clearly computational efficient. The last step in proving \Cref{thm:main} is showing how to find a suitable value for $B$.

\subsection{An algorithm for computing the budget}
\label{sec:stop} 
If $\mathcal{S}^\star_B$ knows what $\purple{\overline{\opt}}$, it can just estimate $\Inf(f)$ by summing up estimates of the individual coordinate influences and set $B = \purple{\overline{\opt}}\cdot \Inf(f) / \eps$. Otherwise, we give a strategy for deciding which $B$ to use satisfying $B \leq \purple{\overline{\opt}} \cdot \Inf(f) / \eps$ and $\mathcal{S}_B^\star$ being an $O(\eps)$-error randomized strategy.

For any fixed $B$, we can efficiently estimate the error of $\mathcal{S}_B^\star$ with random samples using the definition.
\begin{align*}
    \error_f(\mathcal{S}_B^\star) \coloneqq \Prx_{\bx \sim \bits^n}[\mathcal{S}_B^\star(\bx) \neq f(\bx)]
\end{align*}
Our algorithm simply computes $\mathcal{S}_B^\star(\bx)$ for many random $\bx \sim \bits^n$. From this, it can determine, with high probability, whether $\mathcal{S}_B^\star$ is an $O(\eps)$-error strategy. Our algorithm just tries the budgets $B = 2^i \cdot \min_{j \in [n]} c_{j}$  for $i = 1,2,\ldots$ until it finds a budget on which $\error_f(\mathcal{S}_B^\star) < \eps$. Then it uses the corresponding strategy. 

Note that estimating a good choice for $B$ does not require any queries to the particular input given, only to random inputs. For any particular input $\underline{x}$, we can compute $\mathcal{S}_B(\underline{x})$ after determining $B$ using the above algorithm with an average cost of only $B$.

\section{Lower bounds and optimality of our analyses} 
\subsection{Proof of~\Cref{claim:optimality}}

\begin{repclaim}{claim:optimality}
Consider the unit-cost setting, $c = 1^n$.  For any $\eps \in (0, \frac{1}{4})$, there is a function $f : \zo^n \to \{ \pm 1\}$ such that our strategy $\mathcal{S}^\star$ has to incur expected cost $\opt(f, c, \eps) \cdot \tilde{\Omega}_\eps(\Inf(f))$ in order to achieve error $\eps$. 
\end{repclaim} 
\begin{proof}
Our proof will apply to both $\mathcal{S}^\star$ and $\mathcal{S}^\diamond$. These algorithms always query the coordinate with the largest cost-weighted influence. Our lower bound will apply to any such strategy, regardless of the \emph{stopping conditions} -- i.e., how the strategy determines when to terminate for a particular input $x$. Since $c=1^n$, such strategies always query the most influential variable of $f$. \Cref{claim:optimality} is a simple extension of the following theorem from \cite{BLT20}.
\begin{lemma}[Theorem 4(b) from \cite{BLT20}]
    \label{lem:BLT lower bound}
    For every $\varepsilon' \in (0, \frac{1}{2})$ and $s \in \N$, there is a function $f : \{0,1\}^n \rightarrow \{\pm 1\}$ exactly computable by a size $s$ decision tree with the following property. Let $\mathcal{S}$ be \emph{any} strategy that recursively queries the most influential variable, regardless of stopping conditions. If $\mathcal{S}$ has error at most $\varepsilon'$, then when it is represented as a decision tree, $\mathcal{S}$ has size at least $s^{\tilde{\Omega}_\varepsilon(\log s)}$.
\end{lemma}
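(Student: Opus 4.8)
The plan is to prove Lemma~\ref{lem:BLT lower bound} by constructing a recursively-defined ``hard'' function, following~\cite{BLT20}. We build $f = f_L$ on $L = \Theta(\log s)$ disjoint variable-blocks, where $f_0$ is $\mathrm{PARITY}$ on a block $B_0$ of size $b = \Theta(\log s)$, and $f_j$ is obtained from $f_{j-1}$ by one ``gadget layer'': a fresh decoy block $B_j$ (again of size $b$) together with $O(\log(1/\varepsilon'))$ fresh ``control'' bits, combined so that, roughly, reading the controls determines whether the remaining input behaves like $\mathrm{PARITY}(B_j)$ or like $f_{j-1}$. With $b$ chosen as the largest integer with $L\cdot 2^{b}\cdot\mathrm{poly}(1/\varepsilon') \le s$ — so $b = (1-o(1))\log s$ — an easy induction (query the controls of layer $j$, and branch either to a $\mathrm{PARITY}(B_j)$ subtree or recurse into $f_{j-1}$) shows that $f$ has an exact decision tree of size $\le s$.

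The two facts that drive the lower bound, both established by direct uniform-distribution influence computations propagated by induction on $j$ (using that each $f_j$ is balanced and that the blocks are variable-disjoint), are: \emph{(strict domination)} at $f_j$, and at every restriction of $f_j$ that a strategy which always queries the (strictly) most influential variable can reach, the unique most influential variable is a not-yet-queried bit of the ``currently active'' decoy block; such a strategy is therefore \emph{forced} to fully resolve that $b$-variable parity block before it can move to the next layer, and this happens at each of the $L$ layers. \emph{(slow bias decay)} along the resulting forced execution the bias of the current restricted function stays strictly above $\varepsilon'$ until the branch is completely determined — it is $\tfrac12$ while a parity block is being resolved, and the gadget's parameters are tuned so that the intermediate dips at the control-bit queries, and the rate at which probability mass ``falls off'' into already-resolved branches, are both controlled in terms of $\varepsilon'$. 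Given these, any strategy $\mathcal S$ that always queries the most influential variable and has $\error_f(\mathcal S)\le\varepsilon'$ cannot halt early on the undetermined branches, so it is forced to resolve all $L$ decoy blocks on a non-negligible set of inputs; unwinding the induced branching, $\mathcal S$'s decision tree has at least $\prod_{j=1}^{L}2^{|B_j|} = 2^{bL}$ leaves, and $2^{bL} = 2^{(1-o(1))\log^{2}s} = s^{\tilde\Omega_{\varepsilon'}(\log s)}$.

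Two pieces of bookkeeping close the argument: first, one checks that the restrictions reachable by a most-influential-first strategy stay inside the explicit finite family to which the influence calculations apply (they are the $f_j$'s with a prefix of the active block and a prefix of the active controls fixed), so the induction goes through; and second, one checks that the conclusion is robust to how $\mathcal S$ breaks ties among equally influential variables — the only ties that arise are among the symmetric bits of a single parity block or the symmetric bits of a single control gadget, which are interchangeable, so no tie-breaking rule lets $\mathcal S$ escape.

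The step I expect to be the main obstacle is the design of the single gadget layer so that all of the following hold \emph{simultaneously and through the recursion}: the decoy block is \emph{strictly} the most influential choice at every reachable restriction (so greedy is genuinely forced into it, regardless of tie-breaking); the layer nonetheless admits a small decision tree that sidesteps the decoy via the control bits (so the optimal size stays $\le s$); and the bias decays slowly enough, layer by layer, that a strategy of size $\ll s^{\tilde\Omega(\log s)}$ is forced to incur error $>\varepsilon'$. These requirements pull against one another — the more ``useful'' the decoy is made (in order to dominate in influence), the easier it becomes for a clever tree to exploit it and the faster the bias drops — and threading this needle, together with verifying that it survives every restriction the greedy rule induces, is the technical core of~\cite{BLT20}.
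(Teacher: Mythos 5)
First, a point of comparison: the paper does not actually prove this lemma --- it is imported verbatim as Theorem~4(b) of~\cite{BLT20} and used as a black box in the proof of Claim~\ref{claim:optimality}. So there is no in-paper argument to measure your write-up against; what you have attempted is a reconstruction of the cited result itself.

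As such a reconstruction, your proposal has the right overall shape --- a recursive, $\Theta(\log s)$-layer composition in which each layer plants a $\Theta(\log s)$-variable parity ``decoy'' that dominates in influence, together with a small number of control bits that let an exact size-$s$ tree bypass the decoy, so that greedy is forced to depth $\Theta(\log^2 s)$ while the true tree stays polynomial. But there is a genuine gap, and you have named it yourself: the gadget is never constructed. Everything in the argument is conditional on the existence of a one-layer combiner satisfying three properties simultaneously (strict influence domination of the decoy block at \emph{every} reachable restriction, a size budget of roughly $2^{b}\cdot\mathrm{poly}(1/\varepsilon')$ per layer for the exact tree, and bias staying above $\varepsilon'$ along the forced execution), and you correctly observe that these pull against each other. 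Exhibiting such a gadget and verifying the influence calculations under all greedy-reachable restrictions \emph{is} the theorem; without it the proposal is a plan, not a proof. To see that the tension is real and not just bookkeeping: the most naive gadget, a multiplexer that selects $\mathrm{PARITY}(B_j)$ when an AND of $k$ control bits fires and $f_{j-1}$ otherwise, gives each decoy bit influence $2^{-k}$ and each control bit influence exactly $2^{-k}$ as well --- a tie, not strict domination --- so the construction genuinely requires a less obvious combiner. Two further steps are also asserted rather than argued: (i) the passage from ``greedy must fully resolve all $L$ blocks on a non-negligible set of inputs'' to ``the decision tree has $2^{bL}$ leaves'' needs the deep paths to branch fully (a few deep paths contribute only linearly to size), and (ii) the claim that bias stays above $\varepsilon'$ until the very end must survive the control-bit queries at every layer, where the function is \emph{not} balanced. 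None of these is obviously false, but none is established, so the lemma remains unproved by the proposal.
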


Let $f$ be the function defined in \Cref{lem:BLT lower bound} for $\eps' = 2 \cdot \eps$. Suppose, for the sake of contradiction, that $\mathcal{S}^*$ has expected cost only $\Delta = \opt(f, c, \eps) \cdot \tilde{o}_\eps(\Inf(f))$. Then,
\begin{align*}
    \Delta &= \opt(f, c, \eps) \cdot \tilde{o}_\eps(\Inf(f)) \\
    &\leq \opt(f, c, 0) \cdot \tilde{o}_\eps(\Inf(f)) \\
    &\leq  \log(s) \cdot \tilde{o}_\eps(\log(s)) = \tilde{o}_\eps(\log^2 s)
\end{align*}
Consider the strategy $\mathcal{S}_{\text{cutoff}}$ which runs $\mathcal{S}^\star$ with one modification: If $\mathcal{S}^\star$ wishes to make a query that leads to a total cost of more than $\frac{\Delta}{\varepsilon}$, then $\mathcal{S}_{\text{cutoff}}$ returns $0$ instead of exceeding the cutoff $\frac{\Delta}{\varepsilon}$. By Markov's inequality, $\Pr[\mathcal{S}^\star(\bx) \ne  \mathcal{S}_\text{cutoff}(\bx)] \leq \eps$. Then

\begin{align*}
    \error_f(\mathcal{S}_{\text{cutoff}}) &\leq \error_f(\mathcal{S}^\star) + \Pr[\mathcal{S}^\star(\bx) \ne  \mathcal{S}_\text{cutoff}(\bx)] \\
    &\leq \eps + \eps = 2\eps.
\end{align*}
$\mathcal{S}_{\text{cutoff}}$ is also a strategy that recursively queries the most influential variable, just with different stopping conditions then $\mathcal{S}^\star$. By \Cref{lem:BLT lower bound}, when represented as a decision tree, $\error_f(\mathcal{S}_{\text{cutoff}})$ has size at least $s^{\tilde{\Omega}_\varepsilon(\log s)}$. However, it also has worse case depth $\frac{\Delta}{\eps}$, which means it has decision tree size at most
\begin{align*}
    2^{\frac{\Delta}{\eps}} = 2^{\frac{\tilde{o}_\eps(\log^2 s)}{\eps}} =2^{\tilde{o}_\eps(\log^2 s)} = s^{\tilde{o}_\eps(\log s)}.\\
\end{align*}
We have reached a contradiction which proves \Cref{claim:optimality}.\end{proof}

\subsection{Proof of~\Cref{claim:breakthrough}} 

\begin{repclaim}{claim:breakthrough}
Consider the unit-cost setting, $c = 1^n$.  Suppose that for all functions $f : \zo^n \to \{\pm 1\}$ that are computable by $\poly(n)$-size decision trees, there is an efficient $O(\eps)$-error strategy $\mathcal{S}$ for $f$ with expected cost $\opt \cdot o_\eps(\Inf(f))$.   Furthermore, suppose that $\mathcal{S}$ only requires blackbox query access to $f$.  Then there is an $n^{o(\log n)}$-time query algorithm for properly PAC learning polynomial-size decision trees under the uniform distribution.  
\end{repclaim} 

\begin{proof}
Let $f : \zo^n \to \{ \pm 1\}$ be computable by a size-$s$ decision tree $T$ where $s = \poly(n)$.   Consider the truncation $T_{\mathrm{trunc}}$ of $T$ to depth $\log(s/\eps)$, replacing all truncated branches with arbitrary leaf values.  Since
\[ \Pr[\,T_{\mathrm{trunc}}(\bx) \ne T(\bx)\,] \le \mathop{\sum_{\text{paths $\pi$ in $T$}}}_{|\pi| > \log(s/\eps)} \Pr[\,\bx \text{ follows $\pi$}\,] \le s \cdot 2^{-\log(s/\eps)} = \eps, \] 
this tree $T_{\mathrm{trunc}}$ witnesses the existence of an $\eps$-error strategy for $f$ with worst-case cost $\log(s/\eps)$; consequently, $\opt \le \log(s/\eps)$.   By the assumption of~\Cref{claim:breakthrough}, there is an efficient $O(\eps)$-error strategy $\mathcal{S}$ for $f$ with expected cost 
\begin{align*}
\E[\cost(\mathcal{S},\bx)] &\le \opt \cdot o_\eps(\Inf(f)) \\
&\le \log(s/\eps) \cdot o_\eps(\Inf(f)) \\
 &\le \log(s/\eps) \cdot o_\eps(\log s) \\
 &= o_\eps((\log s)^2) \eqqcolon \Delta.
 \end{align*}
The final inequality above uses the bound $\Inf(f)\le \log s$; this is a standard (and easy to prove) fact; see e.g.~\cite{OS07}).  By Markov's inequality, we have that: 
\[ \Pr\bigg[\cost(\mathcal{S},\bx) > \frac{\Delta}{\eps}\bigg] \le \eps. \] 
In other words, the decision tree representation of $\mathcal{S}$, when truncated to depth $\Delta/\eps$ (recall that we are in the unit-cost setting), is $\eps$-close to $f$.  Since $\mathcal{S}$ is an efficient strategy, this truncated tree can be constructed in time polynomial in its size, 
\[ 2^{\Delta/\eps} = 2^{o_\eps((\log s)^2)/\eps} = n^{o_\eps(\log n)}\]
using only blackbox query access to $f$.  This completes the proof of~\Cref{claim:breakthrough}. 
\end{proof} 

 \section*{Acknowledgements} 

We thank Ryan O'Donnell for telling us about~\cite{CFGKRS00}, John Wright for an enjoyable discussion, and the SODA reviewers for their thoughtful and valuable feedback.

The authors were supported by NSF award 192179 and NSF CAREER award
1942123.


\bibliography{most-influential}{}
\bibliographystyle{alpha}

\end{document}